\documentclass{llncs}

\usepackage{amssymb}                 

\usepackage{cite} 

\usepackage{tikz}
\usepackage{verbatim}
\usetikzlibrary{arrows,shapes}

\usepackage{amsmath}

\newtheorem{prop}{Proposition}
\newtheorem{lem}{Lemma}

\newcommand{\classNP}{{\sf NP}}

\newcommand{\ThreeSAT}{{\sc 3Sat}}

\begin{document}

\title{Matching Dynamics with Constraints\thanks{Supported by DFG Cluster of Excellence MMCI and grant Ho 3831/3-1.}}
\author{Martin Hoefer\inst{1} \and Lisa Wagner\inst{2}}
\institute{Max-Planck-Institut f\"ur Informatik and Saarland University, Germany\\ \email{mhoefer@mpi-inf.mpg.de} \and Dept. of Computer Science, RWTH Aachen University, Germany\\ \email{lwagner@cs.rwth-aachen.de}}
\date{}

\maketitle

\begin{abstract}
We study uncoordinated matching markets with additional local constraints that capture, e.g., restricted information, visibility, or externalities in markets. Each agent is a node in a fixed matching network and strives to be matched to another agent. Each agent has a complete preference list over all other agents it can be matched with. However, depending on the constraints and the current state of the game, not all possible partners are available for matching at all times. 

For correlated preferences, we propose and study a general class of hedonic coalition formation games that we call coalition formation games with constraints. This class includes and extends many recently studied variants of stable matching, such as locally stable matching, socially stable matching, or friendship matching. Perhaps surprisingly, we show that all these variants are encompassed in a class of ``consistent'' instances that always allow a polynomial improvement sequence to a stable state. In addition, we show that for consistent instances there always exists a polynomial sequence to every reachable state. Our characterization is tight in the sense that we provide exponential lower bounds when each of the requirements for consistency is violated.

We also analyze matching with uncorrelated preferences, where we obtain a larger variety of results. While socially stable matching always allows a polynomial sequence to a stable state, for other classes  different additional assumptions are sufficient to guarantee the same results. For the problem of reaching a \emph{given} stable state, we show \classNP-hardness in almost all considered classes of matching games.
\end{abstract}

\section{Introduction}

Matching problems are at the basis of many important assignment and allocation tasks in computer science, operations research, and economics. A classic approach in all these areas is \emph{stable matching}, as it captures distributed control and rationality of participants that arise in many assignment markets. In the standard two-sided variant, there is a set of men and a set of women. Each man (woman) has a preference list over all women (men) and strives to be matched to one woman (man). A (partial) matching $M$ has a blocking pair $(m,w)$ if both $m$ and $w$ prefer each other to their current partner in $M$ (if any). A matching $M$ is stable if it has no blocking pair. A large variety of allocation problems in markets can be analyzed using variants and extensions of stable matching, e.g., the assignment of jobs to workers
, organs to patients
, students to dormitory rooms, buyers to sellers, etc. In addition, stable matching problems arise in the study of distributed resource allocation problems in networks
.

In this paper, we study uncoordinated matching markets with dynamic matching constraints. An underlying assumption in the vast majority of works on stable matching is that matching possibilities are always available -- deviations of agents are only restricted by their preferences. In contrast, many assignment markets in reality are subject to additional (dynamic) constraints in terms of information, visibility, or externalities that prohibit the formation of certain matches (in certain states). Agents might have restricted information about the population and learn about other agents only dynamically through a matching process. For example, in scientific publishing we would not expect any person to be able to write a joint paper with a possible collaborator instantaneously. Instead, agents first have to get to know about each other to engage in a cooperation. Alternatively, agents might have full information but exhibit externalities that restrict the possibility to form certain matches. For example, an agent might be more reluctant to accept a proposal from the current partner of a close friend knowing that this would leave the friend unmatched.

Recent work has started to formalize some of these intuitions in generalized matching models with dynamic restrictions. For example, the lack of information motivates \emph{socially}~\cite{AskalidisIKMP13} or \emph{locally stable matching}~\cite{ArcauteV09}, externalities between agents have been addressed in \emph{friendship matching}~\cite{AnshelevichBH13}. On a formal level, these are matching models where the definition of blocking pair is restricted beyond the condition of mutual improvement and satisfies additional constraints depending on the current matching $M$ (expressing visibility/externalities/...). Consequently, the resulting stable states are supersets of stable matchings. Our main interest in this paper are convergence properties of dynamics that evolve from iterative resolution of such restricted blocking pairs. Can a stable state be reached from every initial state? Can we reach it in a polynomial number of steps? Will randomized dynamics converge (with probability 1 and/or in expected polynomial time)? Is it possible to obtain a particular stable state from an initial state (quickly)? These questions are prominent also in the economics literature (for a small survey see below) and provide valuable insights under which conditions stable matchings will evolve (quickly) in uncoordinated markets. Also, they highlight interesting structural and algorithmic aspects of matching markets.

Perhaps surprisingly, there is a unified approach to study these questions in all the above mentioned scenarios (and additional ones) via a novel class of coalition formation games with constraints. In these games, the coalitions available for deviation in a state are specified by the interplay of generation and domination rules. We provide a tight characterization of the rules that allow to show polynomial-time convergence results. They encompass all above mentioned matching models and additional ones proposed in this paper. In addition, we provide lower bounds in each model. 

\subsubsection*{Contribution and Outline}
A formal definition of stable matching games, socially stable, locally stable, and friendship matching can be found in Section~\ref{sec:prelim}. In addition, we describe a novel scenario that we term \emph{considerate matching}.

In Section~\ref{sec:correlated} we concentrate on stable matching with correlated preferences, in which each matched pair generates a single number that represents the utility of the match to both agents. Blocking pair dynamics in stable matching with correlated preferences give rise to a lexicographical potential function~\cite{AbrahamLMM08,AckermannGMRV11}. In Section~\ref{sec:constraints} we present a general approach on \emph{coalition formation games with constraints}. These games are hedonic coalition formation games, where deviating coalitions are characterized by sets of generation and domination rules. We concentrate on classes of rules that we term \emph{consistent}. For correlated preferences all matching scenarios introduced in Section~\ref{sec:prelim} can be formulated as coalition formation games with constraints and consistent rules. For games with consistent rules we show that from every initial coalition structure a stable state can be reached by a sequence of polynomially many iterative deviations. This shows that for every initial state there is always \emph{some} stable state that can be reached efficiently. In other words, there are polynomial ``paths to stability'' for all consistent games. Consistency relies on three structural assumptions, and we show that if either one of them is relaxed, the result breaks down and exponentially many deviations become necessary. This also implies that in consistent games random dynamics converge with probability 1 in the limit. While it is easy to observe convergence in expected polynomial time for socially stable matching, 
such a result is impossible for all consistent games due to exponential lower bounds for locally stable matching. The question for considerate and friendship matching remains an interesting open problem.

In Section~\ref{sec:NPC} we study the same question for a given initial state and a \emph{given stable state}. We first show that if there is a sequence leading to a given stable state, then there is also another sequence to that state with polynomial length. Hence, there is a polynomial-size certificate to decide if a given (stable) state can be reached from an initial state or not. Consequently, this problem is in \classNP\ for consistent games. We also provide a generic reduction in Section~\ref{sec:NPC} to show that it is \classNP-complete for all, socially stable, locally stable, considerate, and friendship matching, even with strict correlated preferences in the two-sided case. Our reduction also works for traditional two-sided stable matching with either correlated preferences and ties, or strict (non-correlated) preferences.

In Section~\ref{sec:general} we study general preferences with incomplete lists and ties that are not necessarily correlated. We show that for socially and classes of considerate and friendship matching we can construct for every initial state a polynomial sequence of deviations to a stable state. Known results for locally stable matching show that such a result cannot hold for all consistent games.



\subsubsection*{Related Work}
For a general introduction to stable matching and variants of the model we refer to textbooks in the area~\cite{Manlove13}. Over the last decade, there has been significant interest in dynamics, especially in economics, but usually there is no consideration of worst-case convergence times or computational complexity. While the literature is too broad to survey here, a few directly related works are as follows. If agents iteratively resolve blocking pairs in the two-sided stable marriage problem, dynamics can cycle~\cite{Knuth76}. On the other hand, there is always a ``path to stability'', i.e., a sequence of (polynomially many) resolutions converging to a stable matching~\cite{RothVV90}. If blocking pairs are chosen uniformly at random at each step, the worst-case convergence time is exponential. In the case of weighted or correlated matching, however, random dynamics converge in expected polynomial time~\cite{AckermannGMRV11,Mathieu10}. More recently, several works studied convergence time of random dynamics using combinatorial properties of preferences~\cite{HoffmanMP13}, or the probabilities of reaching certain stable matchings via random dynamics~\cite{BiroN13}.

In the roommates problem, where every pair of players is allowed to match, stable matchings can be absent, but deciding existence can be done in polynomial time~\cite{Irving85}. If there exists a stable matching, there are also paths to stability~\cite{DiamantoudiMX04}. Similar results hold for more general concepts like $P$-stable matchings that always exist~\cite{InarraLM08}. Ergodic sets of the underlying Markov chain have been studied~\cite{InarraLM10} and related to random dynamics~\cite{KlausFW10}. Alternatively, several works have studied the computation of (variants of) stable matchings using iterative entry dynamics~\cite{BlumRR97,BlumR02,BiroCF08}, or in scenarios with payments or profit sharing~\cite{BiroBGKP14,HoeferW13WINE,AnshelevichBH13}.

Locally stable matching was introduced by~\cite{ArcauteV09} in a two-sided job-market model, in which links exist only among one partition. More recently, we studied locally stable matching with correlated preferences in the roommates case~\cite{Hoefer13}, and with strict preferences in the two-sided case~\cite{HoeferW13}. For correlated preferences, we can always reach a locally stable matching using polynomially many resolutions of local blocking pairs. The expected convergence time of random dynamics, however, is exponential. For strict non-correlated preferences, no converging sequence might exist, and existence becomes \classNP-hard to decide. Even if they exist, the shortest sequence might require an exponential number of steps. These convergence properties improve drastically if agents have random memory.


Friendship and other-regarding preferences in stable matching games have been addressed by~\cite{AnshelevichBH13} in a model with pairwise externalities. They study existence of friendship matchings and bound prices of anarchy and stability in correlated games as well as games with unequal sharing of matching benefits. In friendship matching, agents strive to maximize a weighted linear combination of all agent benefits. In addition, we here propose and study considerate matching based on a friendship graph, where no agent accepts a deviation that deteriorates a friend. Such ordinal externalities have been considered before in the context of resource selection games~\cite{HoeferPPSV11}.

Our general model of coalition formation games with constraints is related to hedonic coalition formation games~\cite{Hajdukova06,Cechlarova08,BogomolnaiaJ02}. A prominent question in the literature is the existence and computational complexity of stable states (for details and references see, e.g., a recent survey~\cite{Woeginger13}).

\subsection{Preliminaries}
\label{sec:prelim}

A \emph{matching game} consists of a graph $G=(V,E)$ where $V$ is a set of vertices representing \emph{agents} and $E\subseteq \{\{u,v\}\mid u,v\in V, u\neq v\}$ defines the \emph{potential matching edges}. A \emph{state} is a matching $M\subseteq E$ such that for each $v\in V$ we have $|\{e~\mid~e\in M, v\in e\}|\leq 1$. An edge $e=\{u,v\}\in M$ provides \emph{utilities} $b_u(e), b_v(e) > 0$ for $u$ and $v$, respectively. If for every $e \in E$ we have some $b_u(e) = b_v(e)=b(e)>0$, we speak of \emph{correlated preferences}. If no explicit values are given, we will assume that each agent has an order $\succeq$ over its possible matching partners, and for every agent the utility of matching edges is given according to this ranking. In this case we speak of \emph{general preferences}. Note that in general, the ranking is allowed to be an incomplete list or to have ties. We define $B(M,u)$ to be $b_u(e)$ if $u\in e\in M$ and $0$ otherwise. A \emph{blocking pair} for matching $M$ is a pair of agents $\{u,v\} \not\in M$ such that each agent $u$ and $v$ is either unmatched or strictly prefers the other over its current partner (if any). A \emph{stable matching} $M$ is a matching without blocking pair.

Unless otherwise stated, we consider the \emph{roommates case} without assumptions on the topology of $G$. In contrast, the \emph{two-sided} or \emph{bipartite} case is often referred to as the \emph{stable marriage problem}. Here $V$ is divided into two disjoint sets $U$ and $W$ such that $E\subseteq \{\{u,w\}|~u\in U, w\in W\}$. Further we will consider matchings when  each agent can match with up to $k$ different agents at the same time.

In this paper, we consider broad classes of matching games, in which additional constraints restrict the set of available blocking pairs. The states that are resilient to such restricted sets of blocking pairs are a superset of stable matchings. Let us outline a number of examples that fall into this class and will be of special interest. \vspace{-0.3cm}

\subsubsection*{Socially Stable Matching} 
In addition to the graph $G$, there is a (social) \emph{network of links} $(V,L)$ which models static visibility. A state $M$ has a \emph{social blocking pair} $e=\{u,v\}\in E$ if $e$ is blocking pair and $e\in L$. Thus, in a social blocking pair both agents can strictly increase their utility by generating $e$ (and possibly dismissing some other edge thereby). A state $M$ that has no social blocking pair is a \emph{socially stable matching}. A \emph{social improvement step} is the resolution of such a social blocking pair, that is, the blocking pair is added to $M$ and all conflicting edges are removed.\vspace{-0.35cm}

\subsubsection*{Locally Stable Matching} 
In addition to $G$, there is a network $(V,L)$ that models dynamic visibility by taking the current matching into account. To describe stability, we assume the pair $\{u,v\}$ is \emph{accessible} in state $M$ if $u$ and $v$ have hop-distance at most 2 in the graph $(V,L\cup M)$, that is, the shortest path between $u$ and $v$ in $(V,L\cup M)$ is of length at most 2 (where we define the shortest path to be of length $\infty$, if $u$ and $v$ are not in the same connected component). A state $M$ has a \emph{local blocking pair} $e=\{u,v\}\in E$ if $e$ is blocking pair and $u$ and $v$ are accessible. Consequently, a \emph{locally stable matching} is a matching without local blocking pair. A \emph{local improvement step} is the resolution of such a local blocking pair, that is, the blocking pair is added to $M$ and all conflicting edges are removed. We also consider locally stable matchings where instead of the 2-hop-distance the $l$-hop-distance in $(V,L\cup M)$ defines the accessibility.\vspace{-0.35cm}

\subsubsection*{Considerate Matching} 
In this case, the (social) network $(V,L)$ symbolizes friendships and consideration. We assume the pair $\{u,v'\}$ is \emph{not accessible} in state $M$ if there is agent $v$ such that $\{u,v\} \in M$, and (a) $\{u,v\} \in L$ or (b) $\{v,v'\} \in L$. Otherwise, the pair is called accessible in $M$. Intuitively, this implies a form of consideration -- formation of $\{u,v'\}$ would leave a friend $v$ unmatched, so (a) $u$ will not propose to $v'$ or (b) $v'$ will not accept $u$'s proposal. A state $M$ has a \emph{considerate blocking pair} $e=\{u,v\}\in E$ if $e$ is blocking pair and it is accessible. A state $M$ that has no considerate blocking pair is a \emph{considerate stable matching}. A \emph{considerate improvement step} is the resolution of such a considerate blocking pair. \vspace{-0.35cm}

\subsubsection*{Friendship Matching} 
In this scenario, there are numerical values $\alpha_{u,v}\ge 0$ for every unordered pair $u,v\in V$, $u\neq v$, representing how much $u$ and $v$ care for each other's well-being. Thus, instead of the utility gain through its direct matching partner, $u$ now receives a \emph{perceived utility} $B_p(M,u) = B(M,u) + \sum_{v\in V\setminus\{u\}} \alpha_{u,v} B(M,v)$. In contrast to all other examples listed above, this definition requires cardinal matching utilities and cannot be applied directly on ordinal preferences. A state $M$ has a \emph{perceived blocking pair} $e=\{u,v\}\in E$ if $B_p(M,u) < B_p((M\setminus\{e'\mid e\cap e'\neq \emptyset\})\cup\{e\},u)$ and $B_p(M,v) < B_p((M\setminus\{e'\mid e\cap e'\neq \emptyset\})\cup\{e\},v)$. A state $M$ that has no perceived blocking pair is a \emph{perceived} or \emph{friendship stable matching}. A \emph{perceived improvement step} is the resolution of such a perceived blocking pair.


\section{Correlated Preferences}
\label{sec:correlated}
\subsection{Coalition Formation Games with Constraints}
\label{sec:constraints}

In this section, we consider correlated matching where agent preferences are correlated via edge benefits $b(e)$. In fact, we will prove our results for a straightforward generalization of correlated matching -- in correlated coalition formation games that involve coalitions of size larger than 2. In such a \emph{coalition formation game} there is a set $N$ of agents, and a set $\mathcal{C} \subseteq 2^N$ of hyper-edges, the \emph{possible coalitions}. We denote $n = |N|$ and $m = |\mathcal{C}|$. A \emph{state} is a \emph{coalition structure} $\mathcal{S}\subseteq \mathcal{C}$ such that for each $v\in N$ we have $|\{C\mid C\in \mathcal{S}, v\in C\}|\leq 1$. That is, each agent is involved in at most one coalition. Each coalition $C$ has a weight or benefit $w(C)>0$, which is the profit given to each agent $v \in C$. For a coalition structure $\mathcal{S}$, a \emph{blocking coalition} is a coalition $C\in \mathcal{C}\setminus \mathcal{S}$ with $w(C)>w(C_v)$ where $v \in C_v \in \mathcal{S}$ for every $v \in C$ which is part of a coalition in $\mathcal{S}$. Again, the resolution of such a blocking coalition is called an improvement step. A stable state or \emph{stable coalition structure} $\mathcal{S}$ does not have any blocking coalitions. Correlated matching games are a special case of coalition formation games where $\mathcal{C}$ is restricted to pairs of agents and thereby defines the edge set $E$. 

To embed the classes of matching games detailed above into a more general framework, we define \emph{coalition formation games with con\-straints}. For each state $\mathcal{S}$ we consider two sets of rules -- \emph{generation rules} that determine candidate coalitions, and \emph{domination rules} that forbid some of the candidate coalitions. The set of undominated candidate coalitions then forms the blocking coalitions for state $\mathcal{S}$. Using suitable generation and domination rules, this allows to describe socially, locally, considerate and friendship matching in this scenario.

More formally, there is a set $T \subseteq \{(\mathcal{T},C) \mid \mathcal{T} \subset \mathcal{C}, C \in \mathcal{C}\}$ of \emph{generation rules}. If in the current state $\mathcal{S}$ we have $\mathcal{T} \subseteq \mathcal{S}$ and $C \not\in \mathcal{S}$, then $C$ becomes a candidate coalition. For convenience, we exclude generation rules of the form $(\emptyset, C)$ from $T$ and capture these rules by a set $\mathcal{C}_g\subseteq \mathcal{C}$ of self-generating coalitions. A coalition $C \in \mathcal{C}_g$ is a candidate coalition for all states $\mathcal{S}$ with $C\not\in\mathcal{S}$. In addition, there is a set $D\subseteq\{(\mathcal{T},C)\mid \mathcal{T}\subset \mathcal{C},C\in \mathcal{C}\}$ of \emph{domination rules}. If $\mathcal{T} \subseteq \mathcal{S}$ for the current state $\mathcal{S}$, then $C$ must not be inserted. To capture the underlying preferences of the agents, we assume that $D$ always includes at least the set $D_w=\{(\{C_1\},C_2)\mid  w(C_1) \ge w(C_2),C_1 \cap C_2 \neq \emptyset, C_1\neq C_2\}$ of all weight domination rules. 

The undominated candidate coalitions represent the blocking coalitions for $\mathcal{S}$. In particular, the latter assumption on $D$ implies that a blocking coalition must at least yield strictly better profit for every involved agent. Note that in an improvement step, one of these coalitions is inserted, and every coalition that is dominated in the resulting state is removed. By assumption on $D$, we remove at least every overlapping coalition with smaller weight. A coalition structure is stable if the set of blocking coalitions is empty. 

Note that we could also define coalition formation games with constraints for general preferences. Then $D_w=\{(\{C_1\},C_2)\mid C_1 \cap C_2 \neq \emptyset, C_1\neq C_2, \exists v\in C_1: w_v(C_1) \ge w_v(C_2)\}$. However, a crucial point in our proofs is that in a chain of succeeding deletions no coalition can appear twice. This is guaranteed for correlated preferences as coalitions can only be deleted by more worthy ones. For general preferences on the other hand there is no such argument.

In the following we define \emph{consistency} for generation and for domination rules. This encompasses many classes of matching cases described above and is key for reaching stable states (quickly).

\begin{definition}
The generation rules of a coalition formation game with constraints are called \emph{consistent} if $T \subseteq \{(\{C_1\},C_2)\mid~ C_1\cap C_2\neq \emptyset\}$, that is, all generation rules have only a single coalition in their precondition and the candidate coalition shares at least one agent.
\end{definition}

\begin{definition}
The domination rules of a coalition formation game with constraints are called \emph{consistent} if $D\subseteq\{(\mathcal{S},C)\mid \mathcal{S}\subset \mathcal{C},C\in \mathcal{C}, C\notin \mathcal{S}, \exists S\in \mathcal{S}: S\cap C\neq \emptyset \}$, that is, at least one of the coalitions in $\mathcal{S}$ overlaps with the dominated coalition. Note that weight domination rules are always consistent.
\end{definition}



\begin{theorem}\label{polyTime}
In every correlated coalition formation game with constraints and consistent generation and domination rules, for every initial structure $\mathcal{S}$ there is a sequence of polynomially many improvement steps that results in a stable coalition structure. The sequence can be computed in polynomial time.
\end{theorem}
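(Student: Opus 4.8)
The natural approach is to design a potential function that strictly decreases along improvement steps and that can only take polynomially many distinct values, so that any maximal improvement sequence is polynomially bounded and terminates in a stable state. Since preferences are correlated, the standard tool is a lexicographic potential built from the multiset of coalition weights currently present in the structure: order the coalitions in $\mathcal{S}$ by weight in nonincreasing order and compare two states by the lexicographic order on these sorted weight vectors (padding with zeros). This is the classical argument for correlated/weighted matching of Abraham et al. and Ackermann et al. When a blocking coalition $C$ is resolved, it is inserted with weight strictly larger than every coalition it displaces (by the weight-domination rules contained in $D$), and the coalitions that get removed are exactly those dominated in the new state; by consistency of the domination rules each such removed coalition overlaps $C$, and by the weight-domination part it has weight at most $w(C)$ — but more importantly, the newly inserted coalition $C$ is strictly heavier than all of them. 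So in the sorted weight vector we are replacing a block of entries, each smaller than $w(C)$, by a single entry equal to $w(C)$ (plus possibly fewer entries), which is a lexicographic increase. Hence the potential strictly increases with every improvement step.

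The remaining issue — and the main obstacle — is bounding the \emph{length} of the sequence polynomially, not just proving termination. Lexicographic monotonicity alone only yields that no state repeats, which gives an exponential bound since there are exponentially many coalition structures. To get a polynomial bound I would argue more carefully about what a single improvement step does, exploiting both consistency conditions. The key structural fact to establish is: when $C$ is inserted, the set of coalitions deleted as a consequence forms a ``chain'' of overlapping coalitions of strictly decreasing weight, and since correlated weights strictly decrease along such a deletion chain, no coalition is deleted more than once within the resolution of a single blocking step — this is precisely the point the authors flag in the paragraph contrasting correlated with general preferences. So each improvement step is well-defined and touches at most $n$ agents' worth of coalitions. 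Then I would set up the polynomial bound by tracking, for instance, $\sum_{C \in \mathcal{S}} w(C)$ against the distinct weight values, or by a more combinatorial counting: show that the "top" portion of the sorted weight vector stabilizes after polynomially many steps and then induct downward. Concretely, I would consider the coalition(s) of globally maximum weight that are ever involved; once such a coalition enters $\mathcal{S}$ via a self-generating rule or a generation rule, it can never be removed (nothing dominates it), so after at most $m$ steps the maximum-weight "layer" is fixed; then restrict attention to the subgame on the remaining agents and iterate. Since generation rules are consistent — a single coalition in the precondition, overlapping the generated one — fixing a layer of coalitions does not destroy the ability to generate the needed blocking coalitions below it, so the induction goes through with an overall bound that is polynomial in $n$ and $m$.

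For the algorithmic claim, the sequence is computed greedily: at each state, enumerate all coalitions that are candidates under some generation rule and not dominated under any rule in $D$ — this is polynomial in $m$ and the description size of $T$ and $D$ — pick any such blocking coalition, resolve it, and repeat. By the length bound this halts after polynomially many iterations, each of polynomial cost, so the whole construction is polynomial time. The one place to be careful in writing this up is the precise definition of the potential and the precise statement of the "deletion chain has strictly decreasing weights" lemma, since everything — both correctness and the polynomial length — hinges on the weight-domination rules $D_w \subseteq D$ together with the two consistency conditions; I would isolate that as a preliminary claim before assembling the main argument.
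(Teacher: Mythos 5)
The central premise of your plan --- that a lexicographic potential makes \emph{any} maximal improvement sequence polynomially short, so that a greedy ``resolve any undominated blocking coalition'' loop suffices --- is false for this class of games, and the theorem does not claim it. The statement is existential: \emph{some} polynomial-length sequence exists and can be computed. Indeed, locally stable matching with $k=1$ and lookahead $\ell=2$ is a consistent instance, and for it worst-case (and even random) blocking-pair dynamics are known to take exponentially many steps even though short sequences exist; the paper explicitly relies on this distinction. Your lexicographic potential is fine for termination (and your observation that deletions happen only via weight domination, hence along strictly decreasing overlapping chains, matches the paper's preliminary analysis of consistency), but it cannot yield the length bound, and you correctly flag this as the main obstacle.

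Your proposed fix --- fix the maximum-weight layer in at most $m$ steps, then induct downward on a ``subgame on the remaining agents'' --- has the right spirit for half of the argument but is not worked out where it matters. First, whether a heavy coalition is reachable at all depends on the current state through chains of generation rules (each link of which is itself an improvement step that deletes its precondition), so ``the maximum-weight coalition ever involved'' is not well defined independently of the sequence you are constructing. Second, and more importantly, you never deal with the initial structure $\mathcal{S}_0$: its coalitions dominate candidates and block generation chains, and they can only be removed by generating strictly heavier overlapping coalitions, which itself requires planning. The paper's proof handles exactly this via an auxiliary ``object movement graph'' (coalitions as vertices, markings as the current structure, generation rules as weight-increasing exchange edges, domination rules as hyperedges) and a two-phase construction: Phase~1 repeatedly either moves an existing marking along an exchange edge to a strictly heavier undominated position, or generates --- via a reachability computation from the self-generating coalitions --- a coalition that weight-dominates and deletes an existing marking; each round strictly improves or removes one of at most $n$ markings, each of which can improve at most $m$ times, giving $O(nm^2)$ steps. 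Only once no existing marking can be touched does Phase~2 greedily insert the heaviest still-reachable coalitions, which then can never be deleted. To repair your proof you would need to supply an analogue of Phase~1 and replace the greedy loop by such a reachability-based construction.
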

\begin{proof}
At first we analyze the consequences of consistency in generation and domination rules. For generation rules we demand that there is only a single precondition coalition and that this coalition overlaps with the candidate coalition. Thus if we apply such a generation rule we essentially replace the precondition with the candidate. The agents in the intersection of the two coalitions would not approve such a resolution if they would not improve. Therefore, the only applicable generation rules are those where the precondition is of smaller value than the candidate.

Now for domination rules we allow an arbitrary number of coalitions in the precondition, but at least one of them has to overlap with the dominated coalition. In consequence a larger set of coalitions might dominate a non-existing coalition, but to remove a coalition they can only use the rules in $D_w$. That is due to the fact that when a coalition $C$ already exists, the overlapping coalition of the precondition cannot exist at the same time. But this coalition can only be created if $C$ does not dominate it. Especially $C$ has to be less worthy than the precondition. Thus the overlapping precondition alone can dominate $C$ via weight.

The proof is inspired by the idea of the edge movement graph~\cite{Hoefer11Proc}. Given a coalition formation game with consistent constraints and some initial coalition structure $\mathcal{S}_0$, we define an object movement hypergraph 
\[G_{mov}=(V,V_g,T_{mov},D_{mov}).\]
A coalition structure corresponds to a marking of the vertices in $G_{mov}$. The vertex set is $V=\{v_C\mid C\in\mathcal{C}\}$, and $V_g=\{v_C\mid C\in\mathcal{C}_g\}$ the set of vertices which can generate a marking by themselves. The directed exchange edges are $T_{mov}=\{(v_{C_1},v_{C_2})\mid (\{C_1\},C_2)\in E, w(C_1)<w(C_2)\}$. The directed domination hyperedges are given by $D_{mov}=D_1\cup D_w$, where $D_1= \{(\{v_S\mid S\in \mathcal{S}\},v_C)\mid (\mathcal{S},C)\in D\}$. This covers the rule that a newly inserted coalition must represent a strict improvement for all involved agents. The initial structure is represented by a marking of the vertices $V_0=\{v_C\mid C\in \mathcal{S}_0\}$.

We represent improvement steps by adding, deleting, and moving markings over exchange edges to undominated vertices of the object movement graph. Suppose we are given a state $\mathcal{S}$ and assume we have a marking at every $v_C$ with $C \in \mathcal{S}$. We call a vertex $v$ in $G_{mov}$ currently \emph{undominated} if for every hyperedge $(U,v) \in D_{mov}$ at least one vertex in $U$ is currently unmarked. An improvement step that inserts coalition $C$ is represented by marking $v_C$. For this $v_C$ must be unmarked and undominated. We can create a new marking if $v_C \in V_g$. Otherwise, we must move a marking along an exchange edge to $v_C$. Note that this maps the generation rules correctly as we have seen, that we exchange the precondition for the candidate. To implement the resulting deletion of conflicting coalitions from the current state, we delete markings at all vertices which are now dominated through a rule in $D_{mov}$. That is, we delete markings at all vertices $v$ with $(U,v)\in D$ and every vertex in $U$ marked. 

Observe that $T_{mov}$ forms a DAG as the generation of the candidate coalition deletes its overlapping precondition coalition and thus the rule will only be applied if the candidate coalition yields strictly more profit for every agent in the coalition.
	
\begin{lem}
	\label{lem:objectMove}
	The transformation of markings in the object movement graph correctly mirrors the improvement dynamics in the coalition formation game with constraints.
\end{lem}
\begin{proof}
Let $\mathcal{S}$ be a state of the coalition formation game and let $C$ be a blocking coalition for $\mathcal{S}$. Then $C$ can be generated either by itself (that is, $C\in \mathcal{C}_g$) or through some generation rule with fulfilled precondition $C' \in \mathcal{S}$, and is not dominated by any subset of $\mathcal{S}$ via $D$. Hence, for the set of marked vertices $V_{\mathcal{S}}=\{v_C \mid C \in \mathcal{S}\}$ it holds that $v_C$ can be generated either because $v_C\in V_g$ or because there is a marking on some $v_{C'}$ with $(v_{C'},v_C)\in T_{mov}$, and is further not dominated via $D$. Hence, we can generate a marking on $v_C$. It is straightforward to verify that if $v_C$ gets marked, then in the resulting deletion step only domination rules of the form $\{(\{v_S\},v_T)\mid S,T\in\mathcal{C}, S\cap T\neq \emptyset$ and $w(S)\geq w(T)\}$ are relevant. Thus, deletion of markings is based only on overlap with the newly inserted coalition $C$. These are exactly the coalitions we lose when inserting $C$ in $\mathcal{S}$.

Conversely, let $V_{\mathcal{S}}$ be a set of marked vertices of $G_{mov}$ such that $\mathcal{S}=\{C\mid v_C\in V_{\mathcal{S}}\}$ does not violate any domination rule (i.e., for every $(\mathcal{U},C) \in D$, we have $\mathcal{U} \not\subseteq \mathcal{S}$ or $C \not\in \mathcal{S}$). Then $\mathcal{S}$ is a feasible coalition structure. Now if $v_C$ is an unmarked vertex in $G_{mov}$, then $C \notin \mathcal{S}$. Furthermore, assume $v_C$ is undominated and can be marked, because $v_C\in V_g$ or because some marking can be moved to $v_C$ via an edge in $T_{mov}$. Thus for every $\{\mathcal{S}_C,C\}\in D$ $v_C$ being undominated implies $\mathcal{S}_C\not\subset\mathcal{S}$. The property that $v_C$ can be marked implies that $C$ is self-generating or can be formed from $\mathcal{S}$ using a generation rule. Hence $C$ is a blocking coalition in $\mathcal{S}$. The insertion $C$ again causes the deletion of exactly the coalitions whose markings get deleted when $v_C$ is marked.\qed
\end{proof}
%

To show the existence of a short sequence of improvement steps we consider two phases.
\begin{description}
 \item [\bf Phase 1] In each round we check whether there is an exchange edge from a marked vertex to an undominated one. If this is the case, we move the marking along the exchange edge and start the next round. Otherwise for each unmarked, undominated $v\in V_g$ we compute the set of reachable positions. This can be done by doing a forward search along the exchange edges that lead to an unmarked undominated vertex. Note that the vertex has to remain undominated when there are the existing markings and a marking on the source of the exchange edge. If we find a reachable position that dominates an existing marking, we create a marking at the associated $v\in V_g$ and move it along the exchange edges to the dominating position. Then we start the next round. If we cannot find a reachable position which dominates an existing marking, we switch to Phase 2.
 \item [\bf Phase 2] Again we compute all reachable positions from $v\in V_g$.  We iteratively find a reachable vertex $v_C$ with highest weight $w(C)$, generate a marking at the corresponding $v \in V_g$ and move it along the path of exchange edges to $v_C$. We repeat this phase until no reachable vertex remains.
\end{description}

To prove termination and bound the length, we consider each phase separately. In Phase 1 in each round we replace an existing marking by a marking of higher value either by using an exchange edge or by deleting it through domination by weight. Further the remaining markings either stay untouched or get deleted. Now the number of improvements that can be made per marking is limited by $m$ and the number of markings is limited by $n$. Hence, there can be at most $mn$ rounds in Phase 1. Additionally, the number of steps we need per round is limited $m$ again, as we move the marking along the DAG structure of exchange edges. Thus, phase 1 generates a total of $O(n\cdot m^2)$ steps.

If in Phase 1 we cannot come up with an improvement, there is no way to (re)move the existing markings, no matter which other steps are made in subsequent iterations. This relies on the fact that the presence of additional markings can only restrict the subgraph of reachable positions. For the same reason, iteratively generating the reachable marking of highest weight results in markings that cannot be deleted in subsequent steps. Thus, at the end of every iteration in Phase 3, the number of markings is increased by one, and all markings are un(re)movable. Consequently, in Phase 2 there are $O(m\cdot n)$ steps.

For computation of the sequence, the relevant tasks are constructing the graph $G_{mov}$, checking edges in $T_{mov}$ for possible improvement of markings, or constructing subgraphs and checking connectivity of single vertices to $V_g$. Obviously, all these tasks can be executed in time polynomial in $n$, $m$, $|T|$ and $|D|$ using standard algorithmic techniques.
\qed
\end{proof}

Next, we want to analyze whether consistency of generation and domination rules is necessary for the existence of short sequences or can be further relaxed.

\begin{prop}\label{l>2}
If the generation rules contain more than one coalition in the precondition-set, there are instances and initial states such that every sequence to a stable state requires an exponential number of improvement steps.
\end{prop}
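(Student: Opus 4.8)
The plan is to build a family of instances where generation rules with two coalitions in the precondition force the dynamics to simulate a binary counter, so that reaching the unique stable state requires stepping through all $2^k$ configurations. The natural template is a sequence of "bit gadgets" $B_1, \dots, B_k$, each of which can be in a low-value state or a high-value state, together with a single "goal" coalition that only becomes reachable once every bit has been toggled to high. Weight domination alone guarantees monotone local improvement within a gadget, but the two-coalition preconditions let me express the carry logic: a coalition representing "bit $i$ flips to high" is generated only from the simultaneous presence of "bit $i$ is low" and "bit $i-1$ (through $1$) just reached high", encoded via a chain of auxiliary coalitions in the precondition set. The classical obstruction — that resolving a blocking coalition may destroy the very precondition coalitions needed for later steps — is exactly what makes the counter unavoidable: to re-enable bit $i$, the dynamics must first rebuild the low-order bits, and each such rebuild is itself costly.

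Concretely, I would first fix a small constant-size gadget on $O(1)$ agents per bit whose internal blocking-coalition graph is a short DAG of weights, with a distinguished "armed" low state and "fired" high state. Then I would wire the preconditions so that the only way to move gadget $i$ from armed to fired is to have gadget $i-1$ in its fired state while gadgets $1,\dots,i-1$ are simultaneously armed — i.e., the standard increment pattern $0\!\cdots\!01\!\to\!0\!\cdots\!10$. Because firing gadget $i$ uses up (overlaps and thus deletes, via the weight-domination rules in $D_w$) the coalition certifying gadget $i-1$'s fired state, the only way to fire it again later is to re-traverse the low-order gadgets. A straightforward induction on $k$ — mirroring the analysis that a binary counter performs $\Theta(2^k)$ increments before overflow — shows that any improvement sequence from the all-armed initial state to any state containing the goal coalition must perform $\Omega(2^k)$ steps. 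Since $n = \Theta(k)$, this is exponential in the size of the instance. Finally I would argue that the instance has a stable state at all (the overflow configuration with the goal coalition inserted and all bit gadgets in a terminal rest state has no blocking coalition), and that it is essentially unique, so "every sequence to a stable state" inherits the lower bound; all generation rules used have exactly two coalitions in the precondition and all domination rules are in $D_w$, so only the single-precondition requirement on generation rules is violated, keeping the characterization tight.

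The main obstacle I anticipate is controlling interference between gadgets: with multi-coalition preconditions it is easy to accidentally create shortcut blocking coalitions that let the dynamics skip ahead, collapsing the counter. The fix is to make the agent sets of the gadgets share vertices only in a carefully laid-out "ripple" pattern (gadget $i$ touches a shared agent with gadget $i-1$ and with a global lock) and to choose the weight values so that every coalition not on the intended increment path is weight-dominated by one already present; this forces the unique linear order of moves within each increment and blocks any cross-gadget shortcut. A secondary issue is bounding the length from below rather than above: I would phrase it as an invariant stating that between two consecutive firings of gadget $i$ at least $2^{\,i-1}$ firings of lower-index gadgets must occur, proved by induction, which yields the $\Omega(2^k)$ bound directly without needing to track the full state space.
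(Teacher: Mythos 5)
Your overall strategy -- a binary-counter family in which two-coalition preconditions encode the carry logic -- is a plausible route and is genuinely different from what the paper does. The paper does not build a new construction at all: it takes the known exponential lower-bound instance for locally stable matching with lookahead $\ell=3$ from~\cite[Theorem~3]{Hoefer13} and observes that its embedding into the constrained coalition-formation framework uses generation rules with at most two coalitions in the precondition (the rules capturing accessibility via two matching edges plus one link) and only $D=D_w$, so the exponential bound transfers immediately and the violation of consistency is minimal. Your route, if completed, would yield a self-contained proof, which has some value; but as written it is a blueprint, not a proof.

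The genuine gap is that the instance is never instantiated: no agents, coalitions, weights, or rules are written down, and every step that actually carries the lower bound is deferred (``I would first fix a small constant-size gadget\dots'', ``Then I would wire the preconditions\dots'', ``A straightforward induction\dots shows\dots''). For an existence-of-hard-instances proposition the construction \emph{is} the proof, and the difficulties you yourself flag are exactly where such constructions break. Concretely: (i) re-arming a low-order gadget means re-creating a \emph{low}-weight coalition after higher-weight overlapping coalitions have come and gone, which is only possible if the relevant agents are left unmatched at precisely the right moments -- the blocking-coalition condition $w(C)>w(C_v)$ forbids any matched agent from moving to a lower-weight coalition, and nothing in the proposal verifies this is achievable while $D_w$ is mandatorily present; (ii) the claim that \emph{every} sequence is exponential requires showing both that no intermediate configuration is already stable and that no ordering of resolutions short-circuits the counter, and the invariant ``between two consecutive firings of gadget $i$ at least $2^{i-1}$ lower-index firings occur'' is asserted, not proved, and cannot be proved without the concrete gadget. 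Until the gadget and the induction are written out (or the argument is replaced by a reduction to a known hard instance, as the paper does), the proposal does not establish the proposition.
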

The proof uses a coalition formation game with constraints and inconsistent generation rules obtained from locally stable matching, when agents are allowed to match with partners at a hop distance of at most $\ell = 3$ in $(V, L \cup M)$. For this setting in~\cite[Theorem 3]{Hoefer13} we have given an instance such that every sequence to a stable state requires an exponential number of improvement steps. Note that the example is minimal in the sense that now we have at most 2 coalitions in the precondition-set. The detailed proof can be found in the appendix.

\begin{prop}\label{generationInconsistent}
If the generation rules have non-overlapping precondition- and target-coalitions, there are instances and initial states such that every sequence to a stable state requires an exponential number of improvement steps.
\end{prop}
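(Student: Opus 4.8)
The plan is to exhibit an explicit family of instances, parametrized by $n$, in which the generation rules all have the form $(\{C_1\},C_2)$ with $C_1\cap C_2=\emptyset$, and in which the unique stable coalition structure can be reached from a designated initial structure only by simulating a full run of an $n$-bit binary counter. Since such a counter passes through $2^n$ distinct configurations and we arrange that each ``carry-completing'' improvement step advances the counter value by exactly one, every improvement sequence from the initial structure to a stable one contains at least $2^n-1$ steps, while the instance itself has size polynomial in $n$.

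The conceptual engine is the following contrast with Theorem~\ref{polyTime}. When generation rules are consistent, a rule $(\{C_1\},C_2)$ overlaps in some agent, so resolving it replaces $C_1$ by the strictly more valuable $C_2$; this is exactly what makes $T_{mov}$ a DAG and forces polynomial length. If instead $C_1$ and $C_2$ are disjoint, inserting $C_2$ leaves $C_1$ in place, so $C_1$ can serve as a persistent ``trigger'' that repeatedly re-enables $C_2$ after $C_2$ has been destroyed by an overlapping, higher-weight coalition elsewhere. This re-usability of precondition coalitions is what the construction exploits to build carry chains, and it is the precise feature that the overlap hypothesis of the consistency definition rules out.

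Concretely, I would build one gadget per bit $i\in\{1,\dots,n\}$, each on a constant number of agents and coalitions, and read off the value of bit $i$ in a state $\mathcal{S}$ from which of the bit-$i$ coalitions lies in $\mathcal{S}$. A small set of ``carry'' agents links consecutive gadgets: a generation rule whose precondition is the carry coalition of gadget $i$ and whose (disjoint) target is a coalition of gadget $i+1$ performs the carry into the next bit, while the weight-domination rules in $D_w$ tear down the now-stale lower-order coalitions; the still-present carry coalition of gadget $i$ is what lets the lower bits be re-initialised and later carried again. The benefits $w(\cdot)$ are placed in $n$ disjoint numerical bands, one per bit, and within a band chosen generically so that (i) a coalition is insertable and strictly improving for all its members in a reachable state exactly when it corresponds to the next micro-step of an increment, and (ii) no coalition ever ties a coalition it overlaps, so the only failure of consistency is the disjointness of precondition and target in the generation rules. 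The initial structure encodes the all-zero counter and, by construction, the only stable structure encodes the all-ones counter; in particular $D\supseteq D_w$ and all rules have the prescribed syntactic form, so the instance is a legitimate coalition formation game with constraints and the proposition is tight, as claimed in the introduction and in parallel with Proposition~\ref{l>2} (cf.\ also \cite{Hoefer13}).

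The correctness argument then splits into an invariant lemma and a counting step. The invariant says that every reachable state is a valid partial counter configuration, that the set of available improvement steps is empty iff the counter is maximal, and that every such step either completes the current increment or performs one of its boundedly many intermediate bit resets, so the encoded integer is non-decreasing and rises by exactly one per completed increment. The counting step observes that reaching the all-ones configuration requires $2^n-1$ increments, hence at least $2^n-1$ improvement steps. The main obstacle is establishing the invariant's part (i): one must verify, by a case analysis over all partial counter configurations and all coalitions and rules, that no reachable state admits an improving move that ``skips ahead'' in the counter or that reads two bits inconsistently; this is precisely where the disjoint numerical bands and the careful choice of precondition/target pairs of the generation rules do the work. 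A secondary, routine check is that the micro-steps of a single increment (flipping the lowest $0$-bit, then resetting the lower $1$-bits one at a time) can always be carried out in order and cannot be interleaved across gadgets in a way that leaves a stable but non-maximal state.
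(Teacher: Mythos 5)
Your high-level strategy coincides with the paper's: the paper also proves this proposition by chaining $k$ constant-size gadgets so that the key coalition of gadget $i$ must be created twice for each creation of the key coalition of gadget $i+1$, which is exactly an $n$-bit counter in disguise, and the mechanism it exploits is precisely the one you identify -- a disjoint precondition coalition survives the creation of its target and can therefore re-trigger it after the target is destroyed by weight domination. So you have correctly located the crux and the right shape of lower bound.

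However, as a proof your text has a genuine gap: the instance is never actually constructed. The proposition asserts the \emph{existence} of instances with a certain forced behaviour, so the proof must exhibit one and verify it; everything in your write-up that would do this work is deferred. The agents, coalitions, weights, and generation rules are described only as ``chosen generically so that (i) \dots and (ii) \dots'', and you yourself flag that the invariant's part (i) -- that no reachable state admits a move that skips ahead in the counter or reads two bits inconsistently -- still requires ``a case analysis over all partial counter configurations and all coalitions and rules''. That case analysis, together with the concrete choice of coalitions and weights that makes it go through, \emph{is} the proof; without it there is no guarantee that gadgets with the claimed properties exist, that the all-ones state is the unique stable structure, or that no improvement sequence can short-circuit the increments. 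For contrast, the paper's proof specifies a $9$-agent, $6$-coalition gadget with explicit weights $x_i+1,\dots,x_i+5$ and an explicit rule set, and then traces the (essentially unique) dynamics to show each gadget forces two activations of its predecessor. To close the gap you would need to supply a comparably explicit gadget and carry out the forcing argument for it.
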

The construction used for the proof exploits the fact that if precondition- and target-coalition do not overlap the precondition can remain when the target-coalition is formed. Then the dynamics require additional steps to clean up the leftover precondition-coalitions which results in an exponential blow-up. The entire proof as well as a sketch of the resulting movement graph can be found in the appendix.

\begin{prop}\label{cycle}
If the domination rules include target-coalitions that do not overlap with any coalition in the precondition, there are instances and starting states such that every sequence cycles. 
\end{prop}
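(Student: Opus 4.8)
\textbf{Proof plan for Proposition~\ref{cycle}.}
The plan is to build a small coalition formation game with constraints in which a single ``illegitimate'' domination rule forces an infinite cycle of improvement steps. The key feature to exploit is that once we allow a domination rule $(\mathcal{T},C)$ in which $C$ overlaps with \emph{no} coalition of $\mathcal{T}$, a coalition can be deleted without being replaced by anything that overlaps it; in particular the agents of $C$ are ``robbed'' of their coalition even though nothing better for them has appeared. This breaks the monotonicity argument underlying Theorem~\ref{polyTime} (where every deletion is a weight-domination and hence every agent's utility only increases), and it lets us route the dynamics around a loop.

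Concretely, I would take a handful of agents and a constant number of coalitions $C_1, C_2, \dots, C_r$ arranged so that resolving the blocking coalition $C_{i}$ (generated from $C_{i-1}$ via a consistent, overlapping generation rule with strictly increasing weight) triggers, through an inconsistent domination rule whose precondition becomes satisfied, the deletion of some earlier coalition that had been blocking the re-creation of $C_1$. The weights are chosen so that along the cycle $C_1 \to C_2 \to \cdots \to C_r \to C_1$ each step is a legitimate improvement step \emph{for the agents involved in the newly formed coalition} (weight strictly increases for them), while the non-overlapping domination rule removes a ``bystander'' coalition, resetting the configuration. After $r$ steps the set of marked coalitions returns to the start, so the sequence of improvement steps is forced to repeat forever; moreover one checks that in no intermediate state is the structure stable (there is always an undominated candidate coalition), so \emph{every} maximal improvement sequence cycles.

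The main steps, in order: (i) fix the agent set and the list of coalitions with their weights, making the weights strictly increasing around the intended loop; (ii) specify the generation rules, all of the consistent form $(\{C_{i-1}\}, C_i)$ with $C_{i-1}\cap C_i \neq \emptyset$, plus possibly one self-generating coalition to seed the cycle; (iii) specify the domination rules: all of $D_w$ (forced), plus the one offending rule $(\mathcal{T}, C)$ with $C$ disjoint from every member of $\mathcal{T}$, which is exactly what the hypothesis of the proposition permits; (iv) pick the initial coalition structure $\mathcal{S}_0$ on the cycle; (v) verify by a short case analysis that from $\mathcal{S}_0$ every available improvement step leads to the next configuration on the loop, that each such step is genuinely an improvement step under the game's definition, and that no configuration on the loop is stable. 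I expect the bookkeeping in step (v) to be the main obstacle: one must make sure that the inconsistent domination rule is the \emph{only} thing that can fire to delete the bystander coalition (so the dynamics cannot ``escape'' the loop), that no spurious additional blocking coalitions appear that would let the process leave the cycle toward a stable state, and that the weight ordering is simultaneously consistent with ``improvement at every step'' and with ``returns to start after $r$ steps'' — which is why the deleted coalition must be a bystander disjoint from the current coalition rather than one of the $C_i$ themselves. It is also worth noting explicitly, as in the companion propositions, that a sketch of the resulting movement graph and the full verification are deferred to the appendix.
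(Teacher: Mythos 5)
Your high-level idea is the right one: a domination rule whose target is disjoint from its precondition lets a coalition be deleted without any overlapping, higher-weight replacement appearing, which destroys the monotonicity that drives Theorem~\ref{polyTime}. But your concrete plan is not yet a proof --- no instance is exhibited, and step (v), which you yourself flag as the main obstacle, is exactly where the argument lives. Worse, the architecture you sketch (a chain $C_1\to C_2\to\cdots\to C_r\to C_1$ of pairwise overlapping coalitions, each generated from the previous one with strictly increasing weight) runs into an obstruction you do not resolve: the mandatory rules $D_w$ imply that once some $C_i$ with $w(C_i)>w(C_1)$ and $C_i\cap C_1\neq\emptyset$ is present, $C_1$ is weight-dominated and cannot be re-created until $C_i$ disappears; and in this framework deletions occur only as a consequence of insertions, so you would need a further mechanism to clear the entire chain before the configuration can ``return to the start''. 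Deleting a disjoint bystander coalition does not accomplish this, so the loop as described cannot close.

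The paper's construction sidesteps all of this by making the cycling coalitions pairwise \emph{disjoint}: $N=\{1,\ldots,6\}$, $\mathcal{C}=\mathcal{C}_g=\{\{1,2\},\{3,4\},\{5,6\}\}$, all weights equal to $1$, $T=\emptyset$, and $D$ consisting of $D_w$ (which is vacuous across these disjoint coalitions) together with the three non-overlapping rules $(\{\{1,2\}\},\{3,4\})$, $(\{\{3,4\}\},\{5,6\})$, $(\{\{5,6\}\},\{1,2\})$, starting from $\{\{1,2\}\}$. Because the coalitions are disjoint and self-generating, each is a blocking coalition whenever its agents are free and it is undominated; at every state exactly one candidate is undominated, so the unique improvement sequence is forced around the triangle forever. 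If you want to salvage your write-up, drop the generation rules and the increasing weights and use disjoint, equal-weight, self-generating coalitions --- then the verification in your step (v) reduces to a three-line case check.
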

\begin{proof}
 Consider the following small example with $N=\{1,\ldots,6\}$, $\mathcal{C}$ $=\{\{1,2\},$ $\{3,4\},$ $\{5,6\}\}$, $\mathcal{C}_g = \mathcal{C}$, and weights $w(C) = 1$ for all $C \in \mathcal{C}$. There are no generation rules: $T=\emptyset$ (in addition to $\mathcal{C}_g = \mathcal{C}$). For the domination rules, we consider non-overlapping coalitions in precondition and target: 
\[ D = \{(\{\{1,2\}\},\{3,4\}),(\{\{3,4\}\},\{5,6\}),(\{\{5,6\}\},\{1,2\})\}\enspace.\]
The initial state is $\mathcal{C}_{start}=\{1,2\}$.

Now with $\{1,2\}$ existing, $\{3,4\}$ is dominated and cannot be formed although it is a candidate coalition. The other candidate coalition $\{5,6\}$ is undominated and represents the unique improvement step. As $\{5,6\}$ dominates $\{1,2\}$ (but not vice versa), we lose $\{1,2\}$ when $\{5,6\}$ is formed. Now $\{4,3\}$ is the unique undominated candidate coalition and is formed. Thereby, we lose $\{5,6\}$, and $\{1,2\}$ becomes undominated. Now $\{1,2\}$ is formed, $\{4,3\}$ is deleted, and the cycle is complete.
\qed
\end{proof}


Consistent generation and domination rules arise in a large variety of settings, not only in basic matching games but also in some interesting extensions.

\begin{corollary}\label{embedding} Consistent generation and domination rules are present in 
\begin{itemize}
\item locally stable matching if agents can create $k=1$ matching edges and have lookahead $\ell = 2$ in $G=(V,M\cup L)$.
\item socially stable matching, even if agents can create $k \ge 1$ matching edges.
\item considerate matching, even if agents can create $k \ge 1$ matching edges.
\item friendship matching, even if agents can create $k \ge 1$ matching edges.
\end{itemize}
\end{corollary}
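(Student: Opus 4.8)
The plan is to verify, for each of the four matching scenarios, that its natural formulation as a coalition formation game with constraints uses only generation rules of the form $(\{C_1\},C_2)$ with $C_1\cap C_2\neq\emptyset$ and only domination rules $(\mathcal{S},C)$ in which at least one coalition of $\mathcal{S}$ overlaps $C$. In every case the underlying ``coalitions'' are just matching edges (or, for the $k\ge 1$ variants, the agent still sits in at most one coalition per copy, so the overlap structure is unchanged), and the weight domination rules $D_w$ are consistent by definition, so the only work is to describe the \emph{extra} generation and domination rules induced by the visibility/externality structure and check the two syntactic conditions.

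First I would handle socially stable matching: here visibility is static, so a pair $e=\{u,v\}\in L$ is always a candidate regardless of the state; thus all relevant coalitions are self-generating, $\mathcal{C}_g$ contains exactly the edges in $L\cap E$, and $T=\emptyset$. The only domination rules are $D_w$ together with the ``strict improvement for all involved agents'' rules, all of which are weight-based and hence overlap with the dominated coalition. This trivially stays consistent when each agent may hold up to $k$ edges, since blocking/domination is still decided edgewise. Considerate matching is the next easiest: a pair $\{u,v'\}$ being a candidate in state $M$ is again purely a self-generating matter in the sense that no \emph{precondition edge} is needed to make it a candidate — it is always a candidate — but it is \emph{dominated} in $M$ exactly when some $\{u,v\}\in M$ with $\{u,v\}\in L$ or $\{v,v'\}\in L$ is present. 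So the extra domination rules have the shape $(\{\{u,v\}\},\{u,v'\})$, and $\{u,v\}\cap\{u,v'\}\ni u\neq\emptyset$, so they are consistent; again the $k\ge1$ extension does not disturb the overlap pattern. Friendship matching is similar in that perceived blocking is a condition on the state but, once expanded, the only reason a perceived-improving pair $e$ fails to be a blocking coalition is a weight/utility comparison among overlapping edges; one has to check that the perceived-utility inequalities can be encoded by domination rules whose preconditions always include an edge meeting $e$, which follows because changing $B_p(M,u)$ by forming $e$ only involves edges incident to $u$ or to $u$'s new/old partner, all overlapping $e$. I would phrase this carefully but it is bookkeeping.

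The genuinely substantive case is locally stable matching with $k=1$ and lookahead $\ell=2$: here a pair $\{u,v\}\in E$ is accessible in $M$ iff $u,v$ are at hop-distance $\le 2$ in $(V,L\cup M)$, i.e. either $\{u,v\}\in L$, or there is $x$ with $\{u,x\},\{x,v\}\in L\cup M$. The $L$-only cases are static and go into $\mathcal{C}_g$ or are witnessed by generation rules with an $L$-edge in the precondition — but an $L$-edge need not overlap $\{u,v\}$, which would break consistency. The key observation, and the reason $\ell=2$ (and not $\ell=3$) is the boundary, is that when the witnessing path uses at least one \emph{matching} edge $\{x,v\}\in M$, that matching edge shares the agent $v$ with the candidate $\{u,v\}$, so the generation rule is $(\{\{x,v\}\},\{u,v\})$ with nonempty intersection; and the purely-$L$ accessibility paths, having no matching edges, are state-independent and can be absorbed into $\mathcal{C}_g$. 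I would therefore argue that every accessibility witness that is not state-independent contains a matching edge incident to an endpoint of the candidate pair, giving a single-precondition, overlapping generation rule; this is exactly where the proof would break for $\ell=3$ (a length-3 path can have its sole matching edge in the middle, incident to neither endpoint), matching Propositions~\ref{l>2} and the lower bound of~\cite{Hoefer13}. The main obstacle is thus not any one scenario but making the locally-stable case's case analysis airtight — precisely enumerating which accessibility paths are state-dependent and confirming the incident-matching-edge property — and I would devote most of the write-up to that, treating the other three scenarios as short verifications that $T$ (if nonempty) and $D\setminus D_w$ have the required overlap form.
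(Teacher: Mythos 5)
Your case analysis for $k=1$ matches the paper's construction: socially stable matching needs only $\mathcal{C}_g = L\cap E$ and $D=D_w$; considerate matching adds domination rules $(\{\{u,v\}\},\{u,v'\})$ overlapping in $u$; friendship matching adds one- and two-edge preconditions, each containing an edge incident to an endpoint of the candidate; and your treatment of locally stable matching with $\ell=2$ is actually more explicit than the paper's (which just points back to the $\ell=3$ embedding and observes that the rules $T_2,T_3$ with non-overlapping or double preconditions disappear when $\ell=2$ and $k=1$). Your observation that any state-dependent length-2 accessibility witness must contain a matching edge incident to $u$ or $v$, and that this fails for length-3 paths with the matching edge in the middle, is exactly the right dividing line.

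There is, however, a genuine gap in your handling of the ``even if $k\ge 1$'' clauses, which you dismiss as trivial (``the overlap structure is unchanged,'' ``blocking/domination is still decided edgewise''). The framework requires every agent to lie in at most one coalition, so a $k$-matching agent $u$ must be split into copies $u_1,\dots,u_k$. But then the two coalitions $\{u_i,v_j\}$ and $\{u_{i'},v_{j'}\}$ with $i\neq i'$, $j\neq j'$ are \emph{disjoint}, and you must forbid them from coexisting (an agent may not match the same partner twice). A domination rule $(\{\{u_i,v_j\}\},\{u_{i'},v_{j'}\})$ between disjoint coalitions is precisely what consistency forbids, so the naive copy-based encoding does not work. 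The paper repairs this by introducing an auxiliary agent $a_{u,v}$ for every potential edge and replacing each coalition $\{u_i,v_j\}$ by $\{u_i,v_j,a_{u,v}\}$, so that all parallel copies of an edge overlap in $a_{u,v}$ and the exclusion rules become consistent weight/domination rules. Without this (or an equivalent) device, three of the four bullets are unproved for $k>1$. A secondary, lesser point: for friendship matching the corollary implicitly requires that the domination rules capture \emph{exactly} the perceived blocking pairs (both directions), which the paper verifies with a three-case computation of $B_p$; your ``it is bookkeeping'' is defensible as a sketch, but the converse direction (every undominated candidate is a perceived blocking pair) deserves the explicit inequality check.
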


Due to space restrictions we cannot give a detailed description of the embedding into coalition formation games with constraints. In most cases the embedding is quite straightforward. Agents and edge set are kept as well as the benefits. The generation and domination rules often follow directly from the definitions. However, we want to shortly discuss the more complex mapping of $k$-matching for $k>1$ into coalition formation games with constraints. By definition no agent is allowed to participate in more than one coalition at the same time. Thus we cannot directly embed $k$-matching. Instead we have to map every agent $u$ to $k$ copies $u_1,\ldots,u_k$ who can match one vertex each. With these ``independent'' copies we now encounter the problem that $\{u_i,v_j\}$ and $\{u_{i'},v_{j'}\}$ should not exist simultaneously. This issue can easily be handled via the domination rules, but, as $\{u_i,v_j\}$ and $\{u_{i'},v_{j'}\}$ do not share any agents, rules of the form $(\{\{u_i,v_j\}\},\{u_{i'},v_{j'}\})$ would not be consistent. Thus for every edge $\{u,v\}$ we introduce an auxiliary vertex $a_{u,v}$. Potential coalitions then are given by $\{u_i,v_j,a_{u,v}\}$ instead of $\{u_i,v_j\}$. The exact embedding for every type of game can be found in the appendix. Additionally an exemplar proof for correctness is stated.

Unlike for the other cases, for locally stable matching we cannot guarantee consistent generation rules if we increase the number of matching edges. The same holds for lookahead $>2$. In both cases the accessibility of an edge might depend on more than one matching edge. There are exponential lower bounds in~\cite{Hoefer13,HoeferW13} for those extensions which proves that it is impossible to find an embedding with consistent rules even with the help of auxiliary constructions. 

\subsection{Reaching a Given Matching}
\label{sec:NPC}

In this section we consider the problem of deciding reachability of a \emph{given} stable matching from a given initial state. We first show that for correlated coalition formation games with constraints and consistent rules, this problem is in \classNP. If we can reach it and there exists a sequence, then there always exists a polynomial-size certificate due to the following result.

\begin{theorem}\label{alwaysShortSequence}
In a correlated coalition formation game with constraints and consistent generation and domination rules, for every coalition structure $\mathcal{S}^*$ that is reachable from an initial state $\mathcal{S}_0$ through a sequence of improvement steps, there is also a sequence of polynomially many improvement steps from $\mathcal{S}_0$ to $\mathcal{S}^*$.
\end{theorem}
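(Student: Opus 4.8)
The goal is to take an arbitrary (possibly exponentially long) improvement sequence $\sigma$ from $\mathcal{S}_0$ to the target stable structure $\mathcal{S}^*$ and extract from it a polynomial-length one. I would work entirely in the object movement hypergraph $G_{mov}$ from the proof of Theorem~\ref{polyTime}, where $\mathcal{S}_0$ and $\mathcal{S}^*$ correspond to markings $V_0$ and $V^*$, and use the structural facts established there: exchange edges in $T_{mov}$ form a DAG (markings only move to strictly higher weight), and once Phase~1 gets stuck, a marking that exists can never be (re)moved, because additional markings only shrink the set of reachable/undominated positions. The key new difficulty compared to Theorem~\ref{polyTime} is that we no longer get to choose the final state — we must hit exactly $V^*$ — so the ``greedily grab the heaviest reachable coalition'' trick from Phase~2 does not directly apply.

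The plan is to partition $V^*$ into two kinds of vertices relative to $V_0$: those whose marking is already present in $V_0$ and is never removed along $\sigma$ (call these \emph{stable-from-start}), and the rest. For the first kind there is nothing to do. For the second kind, I would argue that each such target marking $v_C \in V^* \setminus V_0$ is, at the first moment it appears and \emph{persists} in $\sigma$ (i.e.\ is present from some step onward until the end), either self-generating, or reachable via a directed path of exchange edges in $T_{mov}$ from some marking that itself persists, while all the domination hyperedges $(U, v_C)$ have at least one unmarked vertex in $U$ at that moment. The crucial claim is an ordering/independence statement: there is an ordering $v_{C_1}, \dots, v_{C_r}$ of the non-trivial target vertices such that each $v_{C_i}$ can be marked using only (a) the markings of $V_0 \cap V^*$ that stay put, (b) freshly generated markings from $V_g$, and (c) the already-placed targets $v_{C_1}, \dots, v_{C_{i-1}}$ — and that doing so never re-dominates an earlier placed target. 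Intuitively, one reads this ordering off $\sigma$ by looking at the order in which the ``final'' markings settle into place; since $\mathcal{S}^*$ is a valid (indeed stable) coalition structure, no target vertex dominates another, so placing them in the right order is conflict-free.

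Concretely the steps are: (1) Run a truncated Phase~1 / Phase~2 style procedure that first evicts from $V_0$ exactly those markings not in $V^*$; here I would show that any marking that $\sigma$ removes can also be removed by a polynomial-length prefix of moves, because removal happens via weight domination by a coalition that $\sigma$ eventually forms, and by the DAG property that forming chain has length at most $m$. (2) Once the current marking is $V_0 \cap V^*$ plus possibly some junk that is itself removable, clean up so that the marking is a subset of $V^*$. (3) Add the missing target vertices one at a time in the order extracted above, each via a path in the $T_{mov}$-DAG of length at most $m$ (so at most $m$ intermediate markings, each to be cleaned afterward). Each of the $O(n)$ target vertices thus costs $O(m)$ steps for insertion plus $O(m)$ for cleanup, and the eviction phase costs $O(nm^2)$ as in Theorem~\ref{polyTime}, giving a total of $O(nm^2)$ improvement steps — polynomial. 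Correctness of each individual step is immediate from Lemma~\ref{lem:objectMove}; the content is that the final marking is exactly $V^*$ and that all intermediate configurations are legal coalition structures.

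\textbf{Main obstacle.} The delicate point is step~(3): I must be sure that inserting the missing target coalitions can be \emph{sequenced} so that (i) each insertion's precondition (a single lighter coalition, by consistency of generation rules) or self-generation is actually available at that moment, and (ii) no insertion, together with the weight-domination fallout it triggers, knocks out a target coalition already placed or a preserved $V_0$-marking. Point (ii) is where I would lean hardest on the fact that $\mathcal{S}^*$ is a coalition structure with no internal domination conflicts, and on monotonicity (more markings only help against domination, and targets never dominate targets); point (i) is where I would have to carefully mine the original sequence $\sigma$ — showing that the lighter ``feeder'' coalition used to reach $v_{C_i}$ along $\sigma$ can itself be produced cheaply, recursing down the exchange-DAG. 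Making this recursion bottom out in polynomial total length, without double-counting work, is the real technical heart of the argument.
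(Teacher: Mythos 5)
There is a genuine gap, and it sits exactly where you locate it yourself: the sequencing of step~(3) is not a technical detail to be filled in later but the whole content of the theorem, and your plan as stated does not survive it. The evict-then-insert phase separation cannot be justified in this framework. Because consistent generation rules have a single \emph{overlapping, strictly lighter} coalition as precondition, inserting a target $v_C\notin V_g$ requires a specific feeder marking to be present at that moment; that feeder may be one of the $V_0$-markings you evicted in step~(1), or may itself only be creatable while some since-removed coalition still exists. Conversely, the feeder chain for a target $C_i$ consists of coalitions lighter than $C_i$ that overlap it, and any of these intermediate markings can weight-dominate (and hence delete) an already-placed target $C_j$ or a preserved $V_0\cap V^*$ marking, or be dominated by one and hence be uninsertable — so "targets never dominate targets" does not protect you from the scaffolding needed to build them. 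Your appeal to monotonicity is also backwards: in $G_{mov}$ a vertex is dominated when \emph{all} vertices of some precondition set are marked, so additional markings can only create more domination, never less (the paper uses exactly the opposite monotonicity in Phase~2 of Theorem~\ref{polyTime}: extra markings only \emph{restrict} the reachable positions). Finally, the "cleanup" in step~(2) is circular as described: junk can only be removed by weight domination from heavier overlapping coalitions, whose creation produces new junk, and you give no argument that this terminates, let alone in polynomially many steps.

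The paper's proof avoids all of these ordering problems by never reordering the given sequence $\sigma$ at all. It establishes your two correct structural facts — (1) each applied generation rule replaces its precondition by a strictly heavier overlapping coalition, so predecessor chains have length at most $m$, and (2) deletions of \emph{existing} coalitions happen only via weight domination, so deletion chains have length at most $m$ — and then iteratively deletes from $\sigma$ every coalition that is created and later destroyed without ever serving as a deleter or as a generation predecessor, repeating until every surviving step is "useful". Each surviving step is then charged either to the deletion chain emanating from a coalition of $\mathcal{S}_0$ (length at most $m$, each link requiring at most $m$ generation steps) or to the predecessor chain of a coalition of $\mathcal{S}^*$ (length at most $m$), giving $O(|\mathcal{S}_0|\,m^2+|\mathcal{S}^*|\,m)=O(m^2n)$. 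So the raw ingredients you identified are the right ones, but the compression has to be done \emph{in place} on $\sigma$, preserving the relative order of the retained steps; extracting the targets and replaying them in a freshly chosen order is precisely what cannot be made to work without essentially redoing the in-place argument.
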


For the proof we analyze an arbitrary sequence of improvement steps from $\mathcal{S}_0$ to $\mathcal{S}^*$ and show that, if the sequence is too long, there are unnecessary steps, that is, coalitions are created and deleted without making a difference for the final outcome. By identifying and removing those superfluous steps we can reduce every sequence to one of polynomial length. The detailed proof can be found in the appendix. 

For locally stable matching, the problem of reaching a given locally stable matching from a given initial matching is known to be \classNP-complete~\cite{HoeferW13}. Here we provide a generic reduction that shows \classNP-completeness for socially, locally, considerate, and friendship matching, even in the two-sided case. Surprisingly, it also applies to ordinary two-sided stable matching games that have either correlated preferences with ties, or non-correlated strict preferences. Observe that the problem is trivially solvable for ordinary stable matching and correlated preferences without ties, as in this case there is a unique stable matching that can always be reached using the greedy construction algorithm~\cite{AckermannGMRV11}.

\begin{theorem}\label{np}
It is \classNP-complete to decide if for a given matching game, initial matching $M_0$ and stable matching $M^*$, there is a sequence of improvement steps leading form $M_0$ to $M^*$. This holds even for bipartite games with strict correlated preferences in the case of
\begin{enumerate}
 \item socially stable matching and locally stable matching,
 \item considerate matching, and
 \item friendship matching for symmetric $\alpha$-values in $\left[0,1\right]$.
\end{enumerate}
In addition, it holds for ordinary bipartite stable matching in the case of
\begin{enumerate}
\setcounter{enumi}{3}
 \item correlated preferences with ties,
 \item strict preferences.
\end{enumerate}
\end{theorem}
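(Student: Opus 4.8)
\textbf{Proof plan for Theorem~\ref{np}.}

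The plan is to give a single generic reduction from a suitable \classNP-complete source problem and then argue that it specializes to each of the five cases. A natural source is a restricted variant of a path/reachability problem over boolean structures --- for instance, the \classNP-complete problem of deciding reachability of a given stable matching in locally stable matching cited as~\cite{HoeferW13}, or alternatively a direct reduction from \ThreeSAT\ --- but since the cleanest route is to piggyback on the already-known hardness for locally stable matching, I would start there: build a bipartite gadget in which the ``visibility/consideration/friendship'' structure is trivial or vacuous, so that a local (resp.\ social, considerate, perceived) improvement step coincides exactly with an ordinary blocking-pair resolution, and the hard instance of~\cite{HoeferW13} is reproduced verbatim. The key point to check in each case is that the restriction mechanism can be made inert: for socially stable matching take $L = E$; for locally stable matching use the known instance directly; for considerate matching choose the friendship graph $L$ so that no matched edge ever sits on a ``consideration-blocking'' configuration (e.g.\ $L$ a matching on an auxiliary copy, or $L$ empty); for friendship matching pick all $\alpha_{u,v}$ either $0$, or small enough in $[0,1]$ that perceived improvement reduces to individual improvement. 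That handles items~1--3 once we have one bipartite hard instance with strict correlated preferences.

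The main technical work is therefore to produce that single bipartite hard instance with \emph{strict correlated} preferences, since the cited result~\cite{HoeferW13} uses strict non-correlated preferences. I would reduce from \ThreeSAT. The gadget idea: for each variable $x_i$ create a small ``switch'' component whose two stable local configurations encode the truth value, and whose only way to flip is to route an improvement through a shared ``token'' edge; for each clause create a component that can only be ``satisfied'' (reach its target-state edge) if at least one of its literal edges has already been flipped to the satisfying side. Correlated edge weights are assigned in a strictly increasing cascade along the intended improvement sequence, so that the unique forced order of resolutions simulates: first set all variables, then verify all clauses; the target matching $M^*$ is the one in which every clause component has reached its ``satisfied'' edge. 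One shows (i) if the formula is satisfiable, the corresponding variable assignment yields a valid improvement sequence from $M_0$ to $M^*$ (appealing, if desired, to Theorem~\ref{alwaysShortSequence} to keep it polynomial, though the explicit sequence is already short); and (ii) if $M^*$ is reachable, then reading off which side each variable switch ended on gives a satisfying assignment, because the weight cascade forces every clause edge to have been enabled by a genuine literal flip and no variable switch can be in both states. Membership in \classNP\ for cases~1--3 follows from Theorem~\ref{alwaysShortSequence} together with the embeddings of Corollary~\ref{embedding}; for the two-sided correlated case with ties and the strict two-sided case a short certificate has to be argued separately, but the same ``remove superfluous create/delete steps'' argument sketched for Theorem~\ref{alwaysShortSequence} applies once we note that in the correlated-with-ties case a coalition can only be deleted by one of weight $\ge$, so no edge recurs, and in the strict case the known polynomial-certificate argument of~\cite{HoeferW13} applies.

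For items~4 and~5 --- ordinary bipartite stable matching --- the reduction is essentially the same construction with $L$ and $\alpha$ removed entirely; the subtlety is that with fully strict \emph{correlated} preferences the stable matching is unique and always reachable by the greedy algorithm~\cite{AckermannGMRV11}, so hardness there is impossible, which is exactly why we need either ties (item~4) or strictness without correlation (item~5). For item~4 I would re-use the correlated cascade but deliberately introduce ties at the points where the gadget needs two incomparable options, so that multiple stable matchings exist and reaching the designated one is the hard question; for item~5 one drops correlation and assigns strict two-sided preference lists realizing the same forced resolution order. The hardest part of the whole argument, I expect, is the strict-correlated bipartite gadget: making the weight cascade simultaneously (a) force a unique feasible ordering of resolutions that faithfully simulates ``assign variables, then check clauses,'' (b) ensure that a clause component is blocked from reaching its target edge unless a literal was set correctly, and (c) guarantee that no ``cheating'' improvement step can bypass a clause check --- all while keeping every edge weight strictly distinct and the instance of polynomial size. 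Verifying correctness of that gadget in both directions, and confirming that the inert choices of $L$/$\alpha$ in the socially/considerate/friendship cases really do leave the dynamics unchanged, is where the bulk of the appendix proof will go.
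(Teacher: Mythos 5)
There is a genuine gap in your plan for items 1--3, and it is one you almost diagnose yourself at the end. You propose to make the constraint mechanism ``inert'' (take $L=E$ for socially stable matching, $L=\emptyset$ or harmless for considerate matching, $\alpha$ zero or negligible for friendship matching) and then inherit hardness from a single bipartite hard instance with strict correlated preferences. But if the constraints are inert, the game \emph{is} ordinary bipartite stable matching with strict correlated preferences, and --- as you correctly note in your last paragraph, and as the paper notes explicitly --- that problem is trivially solvable because the stable matching is unique and reachable greedily~\cite{AckermannGMRV11}. So the ``single bipartite hard instance with strict correlated preferences'' for ordinary dynamics that your plan for items 1--3 rests on cannot exist; for these cases the constraint mechanism must do real work. (Reproducing the instance of~\cite{HoeferW13} verbatim does not help either, since that instance has strict \emph{non-correlated} preferences, which is exactly the separate case 5.)

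Your second idea --- a direct \ThreeSAT\ reduction with variable ``switch'' components whose irreversible choice of side encodes a truth value, clause components that can only reach their target state after being enabled by a literal, and a strictly increasing weight cascade forcing the order of resolutions --- is essentially the paper's central construction, and the two-directional correctness argument you sketch matches the paper's. What the paper does that your proposal is missing is the key structural split: the central variable/clause-vertex construction is \emph{common to all five cases}, and the per-setting work is concentrated in small clause gadgets where the restriction mechanism itself (a missing social link, a considerate link between two $W$-vertices, a friendship value engineered to create a tie in perceived utility, an explicit tie, or a $2\times 2$ alternating stable subconfiguration for strict non-correlated preferences) is what locks the clause gadget until the clause vertex has been matched into the central construction and released. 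In other words, the constraints are exploited, not neutralized. Membership in \classNP\ via Theorem~\ref{alwaysShortSequence} and the embeddings is fine as you state it.
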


\section{General Preferences}
\label{sec:general}

In this section we consider convergence to stable matchings in the two-sided case with general preferences that may be incomplete and have ties. For locally stable matching it is known that in this case there are instances and initial states such that no locally stable matching can be reached using local blocking pair resolutions. Moreover, deciding the existence of a converging sequence of resolutions is \classNP-hard~\cite{HoeferW13}.

We here study the problem for socially, considerate, and friendship matching. Our positive results are based on the following procedure from~\cite{AckermannGMRV11} that is known to construct a sequence of polynomial length for unconstrained stable matching. The only modification of the algorithm for the respective scenarios is to resolve ``social'', ``considerate'' or ``perceived blocking pairs'' in both phases.

\begin{description}
\item[\bf Phase 1] Iteratively resolve only blocking pairs involving a matched vertex $w\in W$. Phase 1 ends when for all blocking pairs $\{u,w\}$ we have $w\in W$ unmatched.
\item[\bf Phase 2] Choose an unmatched $w\in W$ that is involved in a blocking pair. Resolve one of the blocking pairs $\{u,w\}$ that is most preferred by $w$. Repeat until there are no blocking pairs.
\end{description}

It is rather straightforward to see that the algorithm can be applied directly to build a sequence for socially stable matching.

\begin{theorem}
\label{thm:sociallyOneSided}
In every bipartite instance of socially stable matching $G=(V=U\dot{\cup}W,E)$ with general preference lists and social network $L$, for every initial matching $M_0$ there is a sequence of polynomially many improvement steps that results in a socially stable matching. The sequence can be computed in polynomial time.
\end{theorem}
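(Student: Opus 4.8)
The idea is to run the two-phase procedure of~\cite{AckermannGMRV11} verbatim, but restrict attention to social blocking pairs throughout, and then argue that the two key invariants used in the analysis of the unconstrained algorithm still hold. I would first recall why the procedure terminates and produces a polynomial-length sequence in the unconstrained case: in Phase~1 every resolution of a blocking pair $\{u,w\}$ with $w\in W$ matched either matches a previously unmatched $u$ or strictly improves $w$'s partner along $w$'s preference list, and since each $w\in W$ can only move up its own list a bounded number of times and the number of matched vertices never decreases in a way that reverses, Phase~1 lasts a polynomial number of steps; in Phase~2, each step permanently matches some previously unmatched $w\in W$ to its most preferred available blocking partner, and that $w$ is never again part of a blocking pair, so Phase~2 has at most $|W|$ steps. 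The whole point is that this counting argument is entirely \emph{local} to $W$ and never refers to the full edge set $E$, so it is insensitive to whether we additionally require membership in $L$.

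Second, I would verify that restricting to social blocking pairs (i.e. blocking pairs $e=\{u,w\}$ with $e\in L$) preserves the two structural facts the analysis needs. (i) \emph{Phase 1 still terminates with the right stopping condition}: when Phase~1 ends there is no social blocking pair involving a matched $w\in W$; this is immediate since we only ever resolved such pairs and stop exactly when none remains. (ii) \emph{Phase 1 invariants persist through Phase 2}: the crucial property is that once we are in Phase~2 and we pick an unmatched $w\in W$ and match it to its favourite current social-blocking partner $u$, this resolution does not create a new social blocking pair at some \emph{matched} $w'\in W$. In the unconstrained analysis this holds because matching $u$ to $w$ can only harm the vertex $u$ was matched to before (which lies in $W$ and becomes unmatched) or vertices in $U$; no matched $w'\in W$ is made worse off, and a newly unmatched $w'$ is still an "unmatched $w\in W$" case, not a violation. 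Since $L\subseteq E$, any social blocking pair is in particular a blocking pair, so the same monotonicity argument applies unchanged: no new \emph{social} blocking pair at a matched $w'\in W$ can appear. Hence the invariant "every social blocking pair involves an unmatched $w\in W$" is maintained, and Phase~2 terminates at a state with no social blocking pair at all, i.e. a socially stable matching.

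Third, I would assemble the bound: Phase~1 contributes $O(|E|\cdot|W|)$ steps (each $w\in W$ improves along its list at most $O(|E|)$ times, and there are $|W|$ such vertices — one can state it more crudely as polynomial in $|V|+|E|$), and Phase~2 contributes at most $|W|$ steps; each individual step (finding a social blocking pair of the required type, resolving it) is polynomial-time using standard techniques, so the whole sequence is computed in polynomial time. I would also note explicitly that the initial matching $M_0$ need not be empty: Phase~1 is designed precisely to handle an arbitrary start by first driving all "trouble" into the unmatched part of $W$, so no special treatment of $M_0$ is needed beyond running the procedure.

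\textbf{Main obstacle.} The only nontrivial point is invariant~(ii) in Phase~2 — checking that resolving a most-preferred social blocking pair at an unmatched $w$ cannot resurrect a social blocking pair at an already-matched $w'\in W$. The reason I expect this to be the delicate step rather than the others is that it is the place where "most preferred by $w$" is actually used, and one has to be careful that the externality of the constraint (membership in $L$) does not interact badly: but since the social constraint only ever \emph{removes} candidate blocking pairs relative to the unconstrained model and never adds any, it cannot break a monotonicity property that already holds unconstrained — this is exactly the observation that makes the theorem "rather straightforward", as the paper says, and I would spell it out in one or two sentences rather than redo the full case analysis of~\cite{AckermannGMRV11}.
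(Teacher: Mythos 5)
Your approach is the same as the paper's: run the two-phase procedure of~\cite{AckermannGMRV11} restricted to social blocking pairs, and observe that the Phase-1 termination argument and the Phase-2 invariant (``no matched $w\in W$ is ever part of a social blocking pair'') survive unchanged because the social constraint only removes candidate blocking pairs. Your identification of the delicate step --- that resolving the pair most preferred by the unmatched $w$ cannot create a social blocking pair at an already-matched $w'\in W$ --- and your justification of it (only $u$ and its newly unmatched former partner change state, $u$ strictly improves, so no $U$-vertex newly desires a matched $W$-vertex) coincide with the paper's argument.

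One quantitative slip: your claim that Phase~2 has at most $|W|$ steps because a $w$ matched in Phase~2 ``is never again part of a blocking pair'' is not justified. The invariant only protects $w$ \emph{while it is matched}; a later Phase-2 resolution can match another unmatched $w''$ to $w$'s partner $u$ (if $u$ prefers $w''$ and $\{u,w''\}\in L$), leaving $w$ unmatched again and eligible for further blocking pairs. The correct count goes through the $U$-side, as in the paper: since the $w$ resolved in Phase~2 is always unmatched, no $u\in U$ ever loses a partner or decreases in utility during Phase~2, so each $u$ strictly improves in every resolution it participates in and can appear in at most $|W|$ of them, giving the bound $|U|\cdot|W|$ on the number of Phase-2 steps. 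This is a local repair and does not affect the polynomial bound or the rest of your argument.
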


For extended settings the algorithm still works for somewhat restricted social networks. For considerate matching we assume that the link set is only within $L \subseteq (U \times U) \cup (U \times W)$, i.e., no links within partition $W$. 

\begin{theorem}
\label{thm:considerOneSided}
In every bipartite instance of considerate matching $G=(V=U\dot{\cup}W,E)$ with general preference lists and social network $L$ such that $\{w,w'\}\notin L$ for all $w,w'\in W$, for every initial matching $M_0$ there is a sequence of polynomially many improvement steps that results in a considerate matching. The sequence can be computed in polynomial time.
\end{theorem}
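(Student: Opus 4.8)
\textbf{Proof plan for Theorem~\ref{thm:considerOneSided}.}

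The plan is to mimic the analysis of the unconstrained algorithm from~\cite{AckermannGMRV11}, checking at each step that the two structural invariants behind the polynomial bound still hold when ``blocking pair'' is replaced by ``considerate blocking pair'' and the links avoid partition $W$. First I would recall why the assumption $\{w,w'\}\notin L$ is exactly what is needed: a pair $\{u,v'\}$ is inaccessible in state $M$ only because some $v$ with $\{u,v\}\in M$ satisfies $\{u,v\}\in L$ or $\{v,v'\}\in L$. For a candidate matching edge $\{u,w\}$ with $u\in U$, $w\in W$, the blocking partner $v$ of $u$ lies in $W$, so condition (b) would need $\{v,w\}\in L$ with both $v,w\in W$ — excluded by hypothesis. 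Hence inaccessibility of $\{u,w\}$ can only come from $u$'s own matched friend ($\{u,M(u)\}\in L$), or symmetrically from $w$'s matched friend $v'\in U$ with $\{v',M(w)\}\in L$ or $\{M(w),w\}\in L$; in any case it depends only on the current partners of $u$ and $w$, not on third parties' reshuffling. This ``locality'' of the accessibility predicate to the two endpoints' own edges is the property I would isolate first, because it is what lets the inductive argument of~\cite{AckermannGMRV11} go through essentially verbatim.

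Next I would run the two-phase argument. For Phase~1: each resolved considerate blocking pair $\{u,w\}$ with $w\in W$ matched strictly improves $u$, and — as in the unconstrained proof — one shows that the multiset of utilities of matched $W$-vertices can be tracked by a potential (lexicographic / the same measure used in~\cite{AckermannGMRV11}) that strictly decreases, or alternatively that no $W$-vertex is matched twice to the same $U$-vertex along Phase~1, giving the $O(|U|\cdot|W|)$ bound. The only thing to verify is that dropping considerate blocking pairs (as opposed to all blocking pairs) does not destroy the monotonicity — it does not, since we only ever perform a genuine improvement step, and the termination condition of Phase~1 (every blocking pair $\{u,w\}$ has $w$ unmatched) is a weaker requirement than in the unconstrained case, so Phase~1 still terminates in polynomially many steps. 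For Phase~2: when we pick an unmatched $w\in W$ and resolve its most-preferred considerate blocking pair $\{u,w\}$, the classical invariant is that $w$ never becomes unmatched again and that $u$ is never again matched to a worse (for $u$) partner among future Phase-2 steps, so the set of ``happily matched'' $W$-vertices grows monotonically; this relies on the fact that resolving $\{u,w\}$ can only dislodge $u$'s old partner $w'\in W$, who re-enters the pool, and on $w$ being matched from now on. I would check that accessibility does not interfere: once $w$ is matched to its chosen partner, a later considerate blocking pair for $w$ would have to improve $w$, contradicting that we took the most-preferred one — and accessibility constraints only ever forbid more pairs, never create new ones, so this contradiction is untouched. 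Hence Phase~2 also runs for $O(|U|\cdot|W|)$ steps, and the total sequence is polynomial and clearly constructible in polynomial time.

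The main obstacle I anticipate is not the counting but pinning down precisely which accessibility configurations can arise and showing none of them breaks the Phase-2 monotonicity — in particular ruling out the scenario where, after $w$ is matched to $u$, resolving some other pair elsewhere changes $u$'s or $w$'s matched friend's status and thereby re-opens a considerate blocking pair that re-destabilizes $w$. The restriction $\{w,w'\}\notin L$ is doing the heavy lifting here: it guarantees that whether $\{u,w\}$ is accessible depends only on $\{u,M(u)\}$ and $\{w,M(w)\}$ and their $L$-neighbourhoods, and in Phase~2 both $M(u)$ and $M(w)$ only ``improve'' in the relevant sense, so I would argue that an accessible considerate blocking pair for an already-happily-matched $w$ would have to be an improvement for $w$, contradicting the choice rule. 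Once that case analysis is closed, the rest is a direct transcription of the bound from~\cite{AckermannGMRV11}.
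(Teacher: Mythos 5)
Your overall plan coincides with the paper's proof: run the two-phase algorithm of~\cite{AckermannGMRV11} restricted to considerate blocking pairs, bound Phase~1 by the monotone improvement of matched $W$-vertices, and maintain in Phase~2 the invariant that no matched $w\in W$ lies in a considerate blocking pair. Your ``locality'' observation --- that with no $W$--$W$ links the accessibility of $\{u,w\}$ depends only on the current partners of $u$ and $w$ --- is correct and is exactly the role of the hypothesis.

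However, the step you yourself flag as the main obstacle is not actually closed, and the justification you offer for it is wrong as stated. It is not true that ``accessibility constraints only ever forbid more pairs, never create new ones'': when $w'$ is dislodged by the resolution of $\{u,w\}$, every pair $\{u'',w'\}$ that was previously inaccessible because of $w'$'s old partner $u$ (e.g.\ via $\{u,u''\}\in L$) becomes accessible again, so the accessible set genuinely grows during Phase~2. Likewise, ``$M(u)$ and $M(w)$ only improve'' does not by itself prevent a blocking pair $\{u'',w\}$ that was \emph{inaccessible} at the moment the unmatched $w$ made its choice from becoming accessible later, at which point $w$ would strictly prefer it and the invariant would fail. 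The missing ingredient, which the paper states as its opening observation, is: if a matching edge $e=\{u'',w''\}$ also lies in $L$, then by clause~(a) of the accessibility definition no pair involving $u''$ or $w''$ is ever accessible while $e$ is matched, so $e$ can never be destroyed and both endpoints are frozen for the rest of the run. Since $w$ is single when it chooses, and $W$--$W$ links are excluded, the \emph{only} way a blocking pair $\{u'',w\}$ can be inaccessible at choice time is that $\{u'',M(u'')\}\in L$; by the frozen-edge observation it then stays inaccessible forever. Combined with the facts that $U$-vertices only improve in Phase~2 (so no new ordinary blocking pairs appear on the $U$-side) and that resolving $\{u,w\}$ alters accessibility only for pairs touching $u$, $w$, or the dislodged $w'$, this yields the invariant. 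Once you add this lemma your argument matches the paper's; without it the Phase~2 bound is not established.
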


We also apply the algorithm to friendship matching in case there can be arbitrary friendship relations $\alpha_{u,u'}, \alpha_{u',u} \ge 0$ for each pair $u,u' \in U$. Here we allow asymmetry with $\alpha_{u,u'} \neq \alpha_{u',u}$. Otherwise, for all $u \in U, w,w'\in W$ we assume that $\alpha_{u,w} = \alpha_{w,u} = \alpha_{w,w'} = 0$, i.e., friendship only exists within $U$.

\begin{theorem}\label{perceivedOnesided}
In every bipartite instance of friendship matching $G=(V=U\dot{\cup}W,E)$ with benefits $b$ and friendship values $\alpha$ such that $\alpha_{u,u'}>0$ only for $u,u'\in U$, for every initial matching $M_0$ there is a sequence of polynomially many improvement steps that results in a friendship matching. The sequence can be computed in polynomial time.
\end{theorem}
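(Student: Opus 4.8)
The plan is to run the two-phase algorithm of~\cite{AckermannGMRV11} verbatim, except that we only ever resolve \emph{perceived} blocking pairs, and to argue that correctness and the polynomial bound survive the passage to perceived utilities when friendship is confined to $U$. First I would observe the key structural fact implied by the assumption $\alpha_{u,u'}>0$ only for $u,u'\in U$: a vertex $w\in W$ has no friends, so its perceived utility equals its ordinary utility $B_p(M,w)=B(M,w)$, and the perceived-blocking condition for $w$ is exactly the ordinary one. Hence the behaviour on the $W$-side is completely unchanged from the unconstrained algorithm; only the $U$-side acceptance rule is modified, since $u$ compares perceived utilities $B_p(M,u)$ which also depend on the matched status of $u$'s friends in $U$.

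Next I would re-examine the two phases with this in mind. In Phase~1 every resolved pair $\{u,w\}$ involves an already matched $w$; resolving it strictly improves $w$'s ordinary utility, so $w$'s partner changes monotonically up its preference list, giving the usual $O(|E|)$ bound on Phase~1 regardless of what happens on the $U$-side. The point to check is termination \emph{into} the Phase~1 end condition: I claim that when no perceived blocking pair involving a matched $w$ remains, we can pass to Phase~2. In Phase~2 we repeatedly pick an unmatched $w$ involved in a perceived blocking pair and resolve the one $\{u,w\}$ that $w$ most prefers. The crucial invariant, exactly as in~\cite{AckermannGMRV11}, is that once a $w\in W$ becomes matched in Phase~2 it stays matched and only improves, and moreover no $w$ that is unmatched at the start of Phase~2 ever causes a previously matched $w'$ to be displaced — because any pair resolved in Phase~2 has an unmatched $W$-endpoint, so it removes at most the $U$-endpoint's current edge, never a $W$-vertex's edge. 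This is where the $L\subseteq U\times U$-type restriction (here, $\alpha$ supported on $U$) does its work: since $w$'s own stability is ordinary, the set of $w\in W$ that are ``satisfied'' only grows, bounding Phase~2 by $O(|W|)$ rounds, each costing polynomial work.

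The step I expect to be the main obstacle is showing that the final matching is genuinely a \emph{friendship} (perceived) stable matching and not merely ordinarily stable — i.e.\ that terminating the modified algorithm really does exhaust all perceived blocking pairs, including those $\{u,w\}$ that block only because $u$'s friends' matched status makes the move attractive in perceived terms. For a pair $\{u,w\}$ to be perceived-blocking we need strict improvement for \emph{both} $u$ and $w$; the $w$-side is ordinary, so at termination either $w$ does not ordinarily prefer $u$ (and then $\{u,w\}$ is not perceived-blocking either, since $w$'s perceived and ordinary preferences coincide), or $w$ does prefer $u$ but $u$ does not perceive-improve. One then argues by the Phase~2 stopping rule: when Phase~2 halts, every $w\in W$ that is involved in \emph{any} perceived blocking pair has been processed, and processing $w$ resolves the pair $w$ most prefers, after which $w$'s partner is at least as good; inducting on these resolutions shows no perceived blocking pair can survive. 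The subtlety is that resolving $\{u,w\}$ can \emph{create} a new perceived blocking pair at some $u'\in U$ (a friend of $u$ who just got unmatched, or whose friend just changed partner); I would handle this exactly as in the unconstrained analysis by noting that any newly created blocking pair again has a $W$-endpoint, and that $W$-endpoint is handled in the same Phase~2 loop, so the potential ``number of $w\in W$ still reachable by a perceived blocking pair'' is non-increasing across a round. Assembling these observations — $W$-side unchanged, $U$-side changes never displace a matched $W$-vertex, and the $W$-progress potential is monotone — yields both the polynomial length and the computability in polynomial time, completing the proof.
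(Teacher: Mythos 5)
Your overall plan (run the two-phase algorithm of Ackermann et al.\ and note that $\alpha$ supported on $U$ makes every $w\in W$ evaluate pairs by its ordinary utility) is the same as the paper's, and your Phase~1 analysis is fine. The gap is in Phase~2, at exactly the point you flag as ``the main obstacle'' and then wave away. The danger is not a new blocking pair at an unmatched $W$-vertex (the Phase~2 loop does handle those); it is a new \emph{perceived} blocking pair $\{u'',w'\}$ at a $w'$ that is currently \emph{matched}. Phase~2 by definition never resolves such a pair, so if one can arise the algorithm halts at a state that is not friendship-stable. In the unconstrained analysis this cannot happen because resolving $\{u,w\}$ with $w$ unmatched changes no agent's utility except $u$'s (which increases) and that of $u$'s abandoned partner (who becomes unmatched); but with friendship, every $u''$ with $\alpha_{u'',u}>0$ has its perceived utility changed by $u$'s switch, so ``exactly as in the unconstrained analysis'' does not go through. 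This is where the paper does the real work: writing $M$, $M'$ for the matchings before and after resolving $\{u,w\}$, and $\tilde M''$, $M''$ for the results of letting $u''$ switch to $w'$ from $M$ resp.\ $M'$, it shows $B_p(M',u'')-B_p(M'',u'') = B_p(M,u'')-B_p(\tilde M'',u'')$: the only term that changes between $M$ and $M'$ for $u''\neq u$ is $\alpha_{u'',u}B(\cdot,u)$, and it appears identically in the ``stay'' and ``switch'' utilities because $w$ was unmatched and $u''$ derives no perceived benefit from vertices of $W$ (the case $u''=u$ is handled separately). Hence $u''$'s incentive to switch to $w'$ is unchanged, and the invariant that no matched $w'$ is in a perceived blocking pair survives. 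Without this cancellation argument your proof does not establish that the output is friendship-stable.

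A secondary problem: your step count for Phase~2 rests on invariants that are false. A $w$ matched during Phase~2 can later be abandoned (its partner $u$ may form a perceived blocking pair with another unmatched $w'$ whose direct benefit it prefers), so neither ``once matched in Phase~2, $w$ stays matched and only improves'' nor ``the set of satisfied $w$ only grows'' holds, and the $O(|W|)$-rounds bound is unjustified. The correct potential, as in the paper, lives on the $U$-side: by the invariant no matched $w$ ever leaves its partner, so a matched $u$ never becomes unmatched, and each switch of $u$ strictly increases its direct benefit $b_u$ (its friends' benefits are unaffected by its own re-matching); this bounds Phase~2 by $|U|\cdot|W|$ steps.
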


The algorithm works fine with links between partitions $U$ and $W$ for the considerate setting, but it fails for positive $\alpha$ between partitions in the friendship case. We defer a discussion to the full version of the paper.  

\subsubsection*{Acknowledgment}
Part of this research was done at the Institute for Mathematical Sciences of NUS Singapore, and at NTU Singapore. The authors thank Edith Elkind for suggesting to study considerate matching.

\bibliographystyle{abbrv}


\begin{thebibliography}{10}

\bibitem{AbrahamLMM08}
D.~Abraham, A.~Levavi, D.~Manlove, and G.~O'Malley.
\newblock The stable roommates problem with globally ranked pairs.
\newblock {\em Internet Math.}, 5(4):493--515, 2008.

\bibitem{AckermannGMRV11}
H.~Ackermann, P.~Goldberg, V.~Mirrokni, H.~R{\"o}glin, and B.~V{\"o}cking.
\newblock Uncoordinated two-sided matching markets.
\newblock {\em SIAM J. Comput.}, 40(1):92--106, 2011.

\bibitem{AnshelevichBH13}
E.~Anshelevich, O.~Bhardwaj, and M.~Hoefer.
\newblock Friendship and stable matching.
\newblock In {\em Proc.\ 21st European Symp.\ Algorithms (ESA)}, pages 49--60,
  2013.

\bibitem{ArcauteV09}
E.~Arcaute and S.~Vassilvitskii.
\newblock Social networks and stable matchings in the job market.
\newblock In {\em Proc.\ 5th Intl.\ Workshop Internet \& Network Economics
  (WINE)}, pages 220--231, 2009.

\bibitem{AskalidisIKMP13}
G.~Askalidis, N.~Immorlica, A.~Kwanashie, D.~Manlove, and E.~Pountourakis.
\newblock Socially stable matchings in the hospitals/residents problem.
\newblock In {\em Proc.\ 13th Workshop Algorithms and Data Structures (WADS)},
  pages 85--96, 2013.

\bibitem{BiroBGKP14}
P.~Bir{\'o}, M.~Bomhoff, P.~A. Golovach, W.~Kern, and D.~Paulusma.
\newblock Solutions for the stable roommates problem with payments.
\newblock {\em Theoret.\ Comput.\ Sci.}, 540:53--61, 2014.

\bibitem{BiroCF08}
P.~Bir{\'o}, K.~Cechl{\'a}rov{\'a}, and T.~Fleiner.
\newblock The dynamics of stable matchings and half-matchings for the stable
  marriage and roommates problems.
\newblock {\em Int.\ J. Game Theory}, 36(3--4):333--352, 2008.

\bibitem{BiroN13}
P.~Bir{\'o} and G.~Norman.
\newblock Analysis of stochastic matching markets.
\newblock {\em Int.\ J. Game Theory}, 42(4):1021--1040, 2013.

\bibitem{BlumRR97}
Y.~Blum, A.~Roth, and U.~Rothblum.
\newblock Vacancy chains and equilibration in senior-level labor markets.
\newblock {\em J. Econom.\ Theory}, 76:362--411, 1997.

\bibitem{BlumR02}
Y.~Blum and U.~Rothblum.
\newblock ``{T}iming is everything'' and martial bliss.
\newblock {\em J. Econom.\ Theory}, 103:429--442, 2002.

\bibitem{BogomolnaiaJ02}
A.~Bogomolnaia and M.~Jackson.
\newblock The stability of hedonic coalition structures.
\newblock {\em Games Econom.\ Behav.}, 38:201--230, 2002.

\bibitem{Cechlarova08}
K.~Cechl{\'a}rova.
\newblock Stable partition problem.
\newblock In {\em Encyclopedia of Algorithms}. 2008.

\bibitem{DiamantoudiMX04}
E.~Diamantoudi, E.~Miyagawa, and L.~Xue.
\newblock Random paths to stability in the roommates problem.
\newblock {\em Games Econom.\ Behav.}, 48(1):18--28, 2004.

\bibitem{Hajdukova06}
J.~Hajdukov{\'a}.
\newblock Coalition formation games: {A} survey.
\newblock {\em Intl.\ Game Theory Rev.}, 8(4):613--641, 2006.

\bibitem{Hoefer11Proc}
M.~Hoefer.
\newblock Local matching dynamics in social networks.
\newblock In {\em Proc.\ 38th Intl.\ Coll.\ Automata, Languages and Programming
  (ICALP)}, volume~2, pages 113--124, 2011.

\bibitem{Hoefer13}
M.~Hoefer.
\newblock Local matching dynamics in social networks.
\newblock {\em Inf.\ Comput.}, 222:20--35, 2013.

\bibitem{HoeferPPSV11}
M.~Hoefer, M.~Penn, M.~Polukarov, A.~Skopalik, and B.~V\"ocking.
\newblock Considerate equilibrium.
\newblock In {\em Proc.\ 22nd Intl.\ Joint Conf.\ Artif.\ Intell.\ (IJCAI)},
  pages 234--239, 2011.

\bibitem{HoeferW13WINE}
M.~Hoefer and L.~Wagner.
\newblock Designing profit shares in matching and coalition formation games.
\newblock In {\em Proc.\ 9th Intl.\ Conf.\ Web and Internet Economics (WINE)},
  pages 249--262, 2013.

\bibitem{HoeferW13}
M.~Hoefer and L.~Wagner.
\newblock Locally stable marriage with strict preferences.
\newblock In {\em Proc.\ 40th Intl.\ Coll.\ Automata, Languages and Programming
  (ICALP)}, volume~2, pages 620--631, 2013.

\bibitem{HoffmanMP13}
M.~Hoffman, D.~Moeller, and R.~Paturi.
\newblock Jealousy graphs: {S}tructure and complexity of decentralized stable
  matching.
\newblock In {\em Proc.\ 9th Intl.\ Conf.\ Web and Internet Economics (WINE)},
  pages 263--276, 2013.

\bibitem{InarraLM08}
E.~Inarra, C.~Larrea, and E.~Moris.
\newblock Random paths to {$P$}-stability in the roommates problem.
\newblock {\em Int.\ J. Game Theory}, 36(3--4):461--471, 2008.

\bibitem{InarraLM10}
E.~Inarra, C.~Larrea, and E.~Moris.
\newblock The stability of the roommate problem revisited.
\newblock Core Discussion Paper 2010/7, 2010.

\bibitem{Irving85}
R.~Irving.
\newblock An efficient algorithm for the "stable roommates" problem.
\newblock {\em J. Algorithms}, 6(4):577--595, 1985.

\bibitem{KlausFW10}
B.~Klaus, F.~Klijn, and M.~Walzl.
\newblock Stochastic stability for roommate markets.
\newblock {\em J. Econom.\ Theory}, 145:2218--2240, 2010.

\bibitem{Knuth76}
D.~Knuth.
\newblock {\em Marriages stables et leurs relations avec d'autres problemes
  combinatoires}.
\newblock Les Presses de l'Universit{\'e} de Montr{\'e}al, 1976.

\bibitem{Manlove13}
D.~Manlove.
\newblock {\em Algorithmics of Matching Under Preferences}.
\newblock World Scientific, 2013.

\bibitem{Mathieu10}
F.~Mathieu.
\newblock Acyclic preference-based systems.
\newblock In X.~Shen, H.~Yu, J.~Buford, and M.~Akon, editors, {\em Handbook of
  peer-to-peer networking}. Springer Verlag, 2010.

\bibitem{RothVV90}
A.~Roth and J.~V. Vate.
\newblock Random paths to stability in two-sided matching.
\newblock {\em Econometrica}, 58(6):1475--1480, 1990.

\bibitem{Woeginger13}
G.~Woeginger.
\newblock Core stability in hedonic coalition formation.
\newblock In {\em Proc.\ 39th Intl.\ Conf.\ Current Trends in Theory \&
  Practice of Comput.\ Sci.\ (SOFSEM)}, pages 33--50, 2013.

\end{thebibliography}

\clearpage
\appendix

\section{Omitted Proofs}

\subsection{Proof of Proposition~\ref{l>2}}
In~\cite[Theorem 3]{Hoefer13} we have shown that such instances and starting states exist for locally stable matching, when agents are allowed to match with partners at a hop distance of at most $\ell = 3$ in $(V, L \cup M)$. This scenario can be embedded into the context of coalition formation games with constraints, where we violate only the above mentioned precondition in the generation rules. Note that the violation is minimal in the sense that we increase from one to at most two sets in the precondition. 

Given an instance of locally stable matching with graph $G=(V,E)$, (social) links $L$, correlated preferences based on edge benefits $b(e)$, we define the parameters of the framework as follows. The set of agents is $N=V$, the set of possible coalitions is $\mathcal{C}=E$. The coalitions that can always be generated are the ones connected by at most 3 links, i.e., $\mathcal{C}_g= E \cap \{ \{u,v\}\mid dist_{L}(u,v)\leq 3\}$. The benefit or coalition weight is obviously given by $w=b$. For the generation rules, we have $T = T_1 \cup T_2 \cup T_3$, where
\begin{align*}
T_1 &=\{(\{\{u,v\}\},\{u,v'\})\mid \{u,v\},\{u,v'\}\in E, \{v,v'\}\in L\}\\
T_2 &=\{(\{\{u,v\}\},\{u',v'\})\mid \{u,v\},\{u',v'\}\in E, \{u',v\},\{u,v'\} \in L\}\\
T_3 &= \{(\{\{u',v\},\{u,v'\}\},\{u,v\})\mid \{u,v\},\{u,v'\},\{u',v\}\in E, \{u',v'\} \in L\}
\end{align*}
Here $T_1$ captures accessible pairs with 2 hops composed of one matching edge and one link, $T_2$ captures accessible pairs within distance of 3 hops composed of one matching edge and two links, and $T_3$ captures accessible pairs within distance of 3 hops composed of two matching edges and one link. The latter generation rules are obviously using two coalitions as precondition. The domination rules implement only the necessary preference-based improvement of coalitions $D = D_w$. 
\qed

\subsection{Proof of Proposition~\ref{generationInconsistent}}
We will attach a sequence of gadgets that imply a unique exponential improvement sequence. We use a gadget of size $9$ and a starting state with the property that to create coalition $C_{6,i}$ of gadget $i$ we twice need to generate $C_{1,i}$. Further the gadget will not reach a stable state unless $C_{6,i}$ exists. Using this property we will connect $k$ such gadgets by allowing a creation rule $\{\{C_{1,i+1}\},C_{6,i}\}$ and identifying $0_{i+1}$ with $8_i$. Then to create $C_{k,6}$ (without which the graph would not be stable) $C_{1,1}$ has to be created at least $2^k$ times.\\
Now for gadget $i$ we have $N_i=\{0_i,\ldots, 8_i\}$, $\mathcal{C}_i=\{C_{1,i},\ldots, C_{6,i}\}$ with $C_{1,i}=\{0_i,1_i,2_i\}$, $C_{2,i}=\{1_i,3_i\}$, $C_{3,i}=\{3_i,4_i,5_i\}$, $C_{4,i}=\{4_i,6_i\}$, $C_{5,i}=\{2_i,6_i,7_i\}$, $C_{6,i}=\{5_i,7_i,8_i\}$, weights $w(C_{1,i})=x_i+1$, $w(C_{2,i})=x_i+2$, $w(C_{3,i})=x_i+4$, $w(C_{4,i})=x_i+3$, $w(C_{5,i})=x_i+2$ and $w(C_{6,i})=x_i+5$ with $x_i=5(i-1)$, and generation rules
\begin{align*}
T_1 &=\{\{\{C_{1,1}\},C_{2,1}\},\{\{C_{1,1}\},C_{5,i}\},\{\{C_{2,1}\},C_{3,i}\},\{\{C_{3,1}\},C_{1,1}\},\{\{C_{4,1}\},C_{1,1}\},\\
& \hspace{0.66cm} \{\{C_{5,1}\},C_{6,1}\}\}\enspace,\\
T_i &=\{\{\{C_{1,i}\},C_{2,i}\},\{\{C_{1,i}\},C_{5,i}\},\{\{C_{2,i}\},C_{3,i}\},\{\{C_{3,i}\},C_{4,i-1}\},\\
& \hspace{0.66cm}\{\{C_{4,i}\},C_{4,i-1}\}, \{\{C_{5,i}\},C_{6,i}\}\}~~\mbox{ if }i>1\enspace.
\end{align*}
The set $D$ of domination rules is empty except for all rules of the form $(\{C\},C')$ such that $w(C) \geq w(C')$ and $C \cap C' \neq \emptyset$. As starting coalition structure we have $\{C_{4,k}\}$.\\

The dynamics are best understood when using the object movement graph instead of dealing with the single vertices. In Figure~\ref{bildExpGeneration} we give the object movement graph of the first two gadgets to visualize the dynamics of the gadgets themselves as well as their interaction. We will analyze the dynamics of gadget $1$. The subsequent gadgets work similarly. In the beginning there are no coalitions in gadget 1 so we first have a look at how to get some coalition to start from. Now every $C_{4,i}$ for $i>1$ can only be used to generate $C_{4,i-1}$. Note that in this case $C_{4,i}$ is not deleted. Thus in the beginning those $C_{4,i}$ are one by one created until we reach $C_{4,1}$. With $C_{4,1}$ we can only generate $C_{1,1}$ (in gadgets $i>1$ in this situation we might generate $C_{4,i-1}$ and then ''wait'' for $C_{1,i}$ to arrive). Next as $C_{5,1}$ is blocked by $C_{4,1}$ the only option is to generate $C_{2,1}$ and thus lose $C_{1,1}$ again. From there we can only generate $C_{3,1}$ while losing $C_{2,1}$ and $C_{4,1}$. With the remaining coalition $C_{3,1}$ we can recover $C_{1,1}$ which now leads to creating $C_{5,1}$ and losing $C_{1,1}$ a second time. Next we can only create $C_{6,1}$ which causes the deletion of $C_{3,1}$ and $C_{5,1}$. Finally $C_{6,1}$ can now be used to create $C_{1,2}$ which leaves gadget 1 empty. But at the latest after $C_{3,2}$ has been created in the next step $C_{4,1}$ has to be created again and we can rerun the dynamics for the gadget in the same manner.
\qed

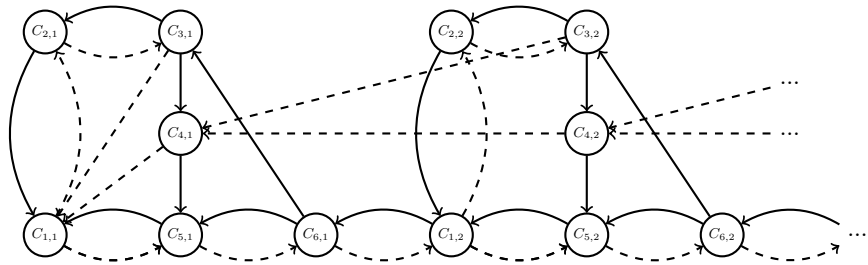
\begin{figure}
\newcommand{\sca}{0.6}
\begin{center}
\begin{tikzpicture}[thick,scale=0.45]
\tikzstyle{leer} = [draw=none,fill=none]

\node[scale=\sca] (11) at (0,3)[draw=black, circle]{$C_{1,1}$};
\node[scale=\sca] (21) at (0,9)[draw=black, circle]{$C_{2,1}$};
\node[scale=\sca] (31) at (4,9)[draw=black, circle]{$C_{3,1}$};
\node[scale=\sca] (41) at (4,6)[draw=black, circle]{$C_{4,1}$};
\node[scale=\sca] (51) at (4,3)[draw=black, circle]{$C_{5,1}$};
\node[scale=\sca] (61) at (8,3)[draw=black, circle]{$C_{6,1}$};

\path[->]
(21) edge [bend right] (11)
(31) edge [bend right] (21)
(51) edge [bend right] (11)
(61) edge [bend right] (51)
(61) edge  (31)
(31) edge (41)
(41) edge (51);

\path[->, dashed]
(11) edge [bend right] (21)
(21) edge [bend right] (31)
(31) edge (11)
(11) edge [bend right] (51)
(11) edge [bend right] (51)
(51) edge [bend right] (61)
(41) edge (11);
%
\node[scale=\sca] (12) at (12,3)[draw=black, circle]{$C_{1,2}$};
\node[scale=\sca] (22) at (12,9)[draw=black, circle]{$C_{2,2}$};
\node[scale=\sca] (32) at (16,9)[draw=black, circle]{$C_{3,2}$};
\node[scale=\sca] (42) at (16,6)[draw=black, circle]{$C_{4,2}$};
\node[scale=\sca] (52) at (16,3)[draw=black, circle]{$C_{5,2}$};
\node[scale=\sca] (62) at (20,3)[draw=black, circle]{$C_{6,2}$};

\path[->]
(22) edge [bend right] (12)
(32) edge [bend right] (22)
(52) edge [bend right] (12)
(62) edge [bend right] (52)
(62) edge  (32)
(32) edge (42)
(42) edge (52)
(12) edge [bend right] (61);

\path[->, dashed]
(12) edge [bend right] (22)
(22) edge [bend right] (32)
(32) edge (41)
(12) edge [bend right] (52)
(12) edge [bend right] (52)
(52) edge [bend right] (62)
(42) edge (41)
(61) edge [bend right] (12);

\node (13) at (24,3) [draw=none]{...};
\node (33) at (22,7.5) [draw=none]{...};
\node (43) at (22,6) [draw=none]{...};

\path[->]
(13) edge [bend right] (62);

\path[->, dashed]
(33) edge (42)
(43) edge (42)
(62) edge [bend right] (13);

\end{tikzpicture}
\end{center}
\caption{\textbf{Object movement graph of gadget 1 and 2:} the thick edges symbolize domination rules and the dashed edges symbolize generation rules}\label{bildExpGeneration}
\end{figure}

%

\subsection{Proof of Lemma~\ref{embedding}}

We will consider each setting individually. Further while we describe the embedding for each case we will only give a proof of correctness for friendship matching with $k=1$ as the proofs for all cases are very similar and this is one of the most complex ones. 

\subsubsection{Locally Stable Matching}
To embed an instance of locally stable matching given by graph $G$, link set $L$, and edge benefits into our framework, we apply the embedding indicated in Proposition~\ref{l>2}. Note that for the standard case of $k=1$ matching edges and lookahead $\ell = 2$, we obtain consistent generation and domination rules. If we change to $k > 1$ matching edges or lookahead of $\ell > 2$, consistency of generation rules becomes violated, as two edges can appear in the precondition. This follows from the exponential lower bounds in~\cite{Hoefer13,HoeferW13}.

\subsubsection{Socially Stable Matching}
To embed an instance of socially stable matching given by graph $G=(V,E)$, link set $L$, and edge benefits $b$ into our framework, we use $N=V$, $\mathcal{C}=E$, and $\mathcal{C}_g= L \cap E$. There are no additional generation rules $T=\emptyset$, and the benefits are obviously given by $w=b$. The domination rules implement only the necessary preference-based improvement of coalitions $D = D_w$. Obviously, generation and domination rules are consistent. 

If we change to $k > 1$ matching edges per agent, then, in principle, we violate a condition of our framework -- that in every state, every agent can be part of at most one coalition. A simple way to embed the games correctly into the framework is to represent each agent $u$ by $k$ auxiliary agents $u_1,\ldots,u_k$, who can match to one partner each. The edges between two agents become a complete bipartite graph between the corresponding auxiliary agents. Note that in the new game, $u_i, u_j$ and $v_{i'},v_{j'}$ can now build two edges among $u$ and $v$. This can be prohibited using an auxiliary agent $a_{u,v}$ for each matching edge $\{u,v\} \in E$, and replacing every coalition $\{u_i,v_j\}$ by $\{u_i,v_j,a_{u,v}\}$, for $i,j = 1,\ldots, k$. 

More formally, we define 
\begin{align*}
N &=\{v_i\mid v\in V, ~i=1\ldots k\}\cup \{ a_{u,v} \mid \{u,v\} \in E\}\\
\mathcal{C}&= \{\{u_i,v_j,a_{u,v}\}\mid \{u,v\}\in E, ~i,j=1\ldots k\}\\
\mathcal{C}_g &= \{\{u_i,v_j,a_{u,v}\}\mid \{u,v\}\in E \cap L, ~i,j=1\ldots k\}\\
w(\{u_i,v_j,a_{u,v}\})&=b(u,v) \text{ for } \{u,v\}\in E, ~i,j=1\ldots k\\
T &= \emptyset\\
D &=D_w
\end{align*}
Obviously, generation and domination rules are consistent. 

\subsubsection{Considerate Matching}
To embed an instance of friendship matching given by graph $G=(V,E)$, link set $L$, and edge benefits $b$, we use $N=V$, $\mathcal{C}=\mathcal{C}_g=E$, and, consequently, $T=\emptyset$. We set $w=b$. For the domination rules, let $D=D_1\cup D_w$ where
\begin{align*}
 D_1 &= \{(\{\{u,v\}\},\{u,v'\}) \mid \{u,v\},\{u,v'\} \in E, \{u,v\}\in L \text{ or } \{v,v'\}\in L\}  \enspace.
\end{align*}
The case of $k>1$ matching edges can be handled via auxiliary agents as explained for socially stable matching above. Again, this essentially affects only the domination rules, which allow more flexibility with respect to consistency. More formally, we define $N$, $\mathcal{C}$, $w$ and $T$ as outlined above and change the remaining definitions to
\begin{align*}
\mathcal{C}_g &= \mathcal{C} \\
D &= D_1 \cup D_w \text{ with }\\
D_1 &= \{(\{\{u_i,v_j,e_{u,v}\}\},\{u_{i'},v'_{j'},e_{u,v}\})\mid \\& \hspace{0.66cm} \{u,v\},\{u,v'\}\in E, \{u,v\}\in L \text{ or } \{v,v'\}\in L,~i,i',j,j'=1\ldots k\} \enspace.
\end{align*}

\subsubsection{Friendship Matching}
To embed an instance of friendship matching given by graph $G=(V,E)$, edge benefits $b$, symmetric friendship values $\alpha$, we use $N=V$, $\mathcal{C}=\mathcal{C}_g=E$, and, consequently, $T=\emptyset$. To model the perceived utilities, we assume $w(\{u,v\})=b(\{u,v\})+\alpha_{u,v}b(\{u,v\})$. For the domination rules, let $D=D_1\cup D_2\cup D_w$ where
\begin{align*}
 D_1 &= \{(\{\{u,v\}\},\{u,v'\})\mid \\
 & \hspace{0.66cm} \alpha_{v',v}b(\{u,v\})+\alpha_{v',u}b(\{u,v\})\geq b(\{u,v'\})+\alpha_{v',u}b(\{u,v'\})\} \\
 D_2 &= \{(\{\{u,v\},\{u',v'\}\},\{u,v'\})\mid \\
 & \hspace{0.66cm} b(\{u',v'\})+\alpha_{v',u'}b(\{u',v'\})+\alpha_{v',v}b(\{u,v\})+\alpha_{v',u}b(\{u,v\})\\  &\hspace{0.76cm}\geq b(\{u,v'\})+\alpha_{v',u}b(\{u,v'\})\} \enspace.
\end{align*}
The domination rules in $D_1$ describe that $v'$ earns more from existing $\{u,v\}$ through friendship than from the candidate $\{u,v'\}$. In $D_2$, agent $v'$ earns more from the combination of benefits from dropped agents than from the candidate $\{u,v'\}$. Again, the case of $k>1$ edges per agent can be included using auxiliary agents and preserves consistency as it affects only the domination rules.

Regarding correctness we only prove the case of $k=1$. The other cases are very similar.

Let $M$ be a matching in $G$. Now assume we have a perceived blocking pair $\{u,v'\}$ for $M$ that we intend to resolve. Domination can only occur through edges involving $u$ or $v'$. 

Firstly, if $u$ and $v'$ are unmatched, there cannot be any edges dominating $\{u,v'\}$, and we can generate all matching edges as candidate coalitions via $\mathcal{C}_g=\mathcal{C}$. Hence, perceived blocking pairs between unmatched agents are also undominated candidate coalitions. After adding $\{u,v'\}$, no edge is removed. Hence, the set of coalitions resulting from the rules above exactly represents the matching after resolving the perceived blocking pair $\{u,v'\}$.

Secondly, assume that agent $u$ is matched to some agent $v$, but agent $v'$ is unmatched. As $\{u,v'\}$ is a perceived blocking pair, we know that $u$ improves by switching from $v$ to $v'$, that is, 
\begin{align*}
 &  b(\{u,v\})+\alpha_{u,v} b(\{u,v\})+\sum_{u'\in V\setminus\{u,v,v'\}}\alpha_{u,u'}b(M,u')\\
 &< \;\; b(\{u,v'\})+\alpha_{u,v'} b(\{u,v'\})+\sum_{u'\in V\setminus\{u,v,v'\}}\alpha_{u,u'}b(M,u')
\end{align*}
which cancels to $b(\{u,v\})+\alpha_{u,v} b(\{u,v\})< b(\{u,v'\})+\alpha_{u,v'} b(\{u,v'\})$. Thus $\{u,v'\}$ is not dominated by $\{u,v\}$ through $D_w$. Now $\{u,v\}$ might still dominate $\{u,v'\}$ through $D_1$. But then $\alpha_{v,v'}b(\{u,v\})+\alpha_{u,v'}b(\{u,v\})\geq b(\{u,v'\})+\alpha_{u,v'}b(\{u,v'\})$, that is, the gain $v'$ receives through its friendships with $v$ and $u$ from $\{u,v\}$ is at least as large as the gain it would receive by matching with $u$ (directly and through friendship). This contradicts the assumption that $\{u,v'\}$ is a perceived blocking pair. Hence $\{u,v'\}$ is an undominated candidate coalition. After adding $\{u,v'\}$, $\{u,v\}$ is dominated through weight and hence gets dropped. Again, the set of coalitions resulting from our rules exactly correspond to the matching after resolving the perceived blocking pair $\{u,v'\}$.

Thirdly, assume that $\{u,v\}$ and $\{u',v'\}$ are present in $M$. The previous arguments for edges that dominate $\{u,v'\}$ through $D_1$ or $D_w$ can be applied again. It remains to check whether domination via $D_2$ is possible. But the domination rules in $D_2$ imply that the loss caused by giving up $\{u,v\}$ \textbf{and} $\{v',u'\}$ for $v'$ is at least as large as the gain generated from $\{u,v'\}$. Thus, as $\{u,v'\}$ is a perceived blocking pair, there is no rule in $D_2$ pointing from $\{\{u,v\},\{u',v'\}\}$ to $\{u,v'\}$ and we can simply generate $\{u,v'\}$ again. Then $\{u,v\}$ and $\{u',v'\}$ are dominated via $D_w$ and hence get dropped, which gives $M\setminus\{\{u,v\},\{u',v'\}\}\cup\{\{u,v'\}\}$. Again, the set of coalitions resulting from our rules exactly correspond to the matching after resolving the perceived blocking pair $\{u,v'\}$.

Conversely, let $\mathcal{S}$ be a feasible coalition structure in our coalition formation game with constraints. Observe that our rules imply that $\mathcal{S}$ corresponds to a matching $M$. Further let $\{u,v'\} \not\in \mathcal{S}$ be an unmarked, undominated edge. Assume for contradiction that $\{u,v'\}$ is not a perceived blocking pair because $v'$ would not improve. 

Firstly, if $v'$ is single, this is only possible if $v'$ gains at least as much through $u$'s current matching edge than through directly matching to $u$. Then, a rule of $D_1$ would dominate $\{u,v'\}$. 

Secondly, assume that $v'$ is matched to some $u'$ but $u$ is unmatched. Then $\{u,v'\}$ is not a perceived blocking pair if $b(\{u',v'\})+\alpha_{u',v'}b(\{u',v'\})\geq b(\{u,v'\})+\alpha_{u,v'}b(\{u,v'\})$. But in that case $\{u',v'\}$ would dominate $\{u,v'\}$ by $D_w$. 

Thirdly, let $v'$ be matched to some $u'$ and $u$ matched to some $v$. As before we must have $b(\{u',v'\})+\alpha_{u',v'}b(\{u',v'\})< b(\{u,v'\})+\alpha_{u,v'}b(\{u,v'\})$ and $b(\{u,v\})+\alpha_{u,v}b(\{u,v\})< b(\{u,v'\})+\alpha_{u,v'}b(\{u,v'\})$ as otherwise $\{u,v'\}$ would be dominated by $D_w$. $\{u,v'\}$ is no improvement for $v'$ only if the combined loss caused by $\{u,v\}$ and $\{u',v'\}$ out-weights the gain through $\{u,v'\}$. In other words, 
\begin{align*}
& b(\{u',v'\})+\alpha_{u',v'}b(\{u',v'\})+\alpha_{v,v'}b(\{u,v\})+\alpha_{u,v'}b(\{u,v\})\\
\geq \; & b(\{u,v'\})+\alpha_{u,v'}b(\{u,v'\})\enspace.
\end{align*} 
Then there is an according rule in $D_2$ and the combined existence of $\{u',v'\}$ and $\{u,v\}$ results in $\{u,v'\}$ being dominated. 

Hence, whenever $\{u,v'\}$ is an undominated candidate coalition, it represents a perceived blocking pair for the current matching. Further when it gets inserted, any former matching edges of $u$ and $v'$ get dominated by weight and dropped, while all other edges remain unaffected. Hence, the new coalition structure represents exactly matching $M$ after resolving $\{u,v'\}$.
\qed

\subsection{Proof of Theorem~\ref{alwaysShortSequence}}
The proof generalizes a similar result for locally stable matchings~\cite{HoeferW13} and is based on two observations:
\begin{enumerate}
\item Note that by design of $T$ and $D$, if some generation rule $(\{C_1\},C_2)$ is finally applied, we need to have $w(C_2)>w(C_1)$ and the creation of $C_2$ causes the deletion of $C_1$. Thus, within any sequence of improvement steps we can identify a unique predecessor for each coalition $C \notin\mathcal{C}_g$ whose presence is necessary for creation of $C$. Furthermore, this predecessor has weight strictly smaller than $w(C)$. Hence the sequence of predecessors necessary to generate any coalition $C$ is limited by $m$.
\item Our second observation is that the only domination rules that are applied in the deletion of an existing coalition are those based on weight domination. Thus, every deletion of a coalition is accompanied by the creation of a worthier coalition. A chain of deletions again is limited in length by $m$.
\end{enumerate}

Now let $\mathcal{I}$ be some sequence of improvement steps converting $\mathcal{S}_0$ into $\mathcal{S}^*$. If a coalition is created and deleted again but neither used to delete another coalition nor marked as a predecessor to create one, then this coalition provides no contribution for the transformation from $\mathcal{S}_0$ to $\mathcal{S}^*$. Thus we can delete all these coalitions from $\mathcal{I}$ and receive a sequence $\mathcal{I}_1$ which as well transforms $\mathcal{S}_0$ into $\mathcal{S}^*$ via improvement steps. Now in $\mathcal{I}_1$ there might be coalitions which get created and deleted again without use as they only deleted or created coalitions we dropped from $\mathcal{I}_1$. Thus we can repeat this sequence truncation until all remaining coalitions are of use. We claim that this sequence $\mathcal{I}^*$ has to be of polynomial length.

First assume state $\mathcal{S}_0$ is the empty coalition structure. Then we do not have to delete any unfitting coalitions but simply create the needed ones. As not all of $\mathcal{S}^*$ might be in $\mathcal{C}_g$ we possibly have to use generation rules of $T$ but by (1) we know that for each desired coalition we need at most $m$ steps. Thus, overall we need at most $m^2$ steps. Now for an arbitrary starting coalition structure we might also have to delete certain coalitions to reach $\mathcal{S}*$. Thus each of coalition of $\mathcal{S}_0$ might generate a chain of coalitions deleting each other throughout the sequence, but (2) tells us that this chain is limited in length by $m$. Also, (1) again tells us that the number of steps it takes to generate the coalition which is used for the deletion is limited by $m$ as well. The only remaining issue is to argue why additional deletion of coalitions during this procedure does not create problems. Now if such a coalition was part of $\mathcal{S}_0$ it does not create any additional costs. If it was part of some deletion chain, its cost was already accounted towards the coalition of $\mathcal{S}_0$ which had to be deleted. In all other cases, the creation of this coalition was of no use in the first place, that is, it would not be part of the truncated sequence $\mathcal{I}^*$. Hence, overall we have a sequence of length at most $|\mathcal{S}_0|\cdot m^2 + |\mathcal{S}^*|\cdot m \in O(m^2n)$ steps. \qed 

\subsection{Proof of Theorem~\ref{np}}
The proof is done via reduction from \ThreeSAT\ . We will use the same idea and central construction for all cases and only adapt the structure of the clause-gadgets to the specific settings. Each clause gadget will have the property that one particular vertex $x_C$ has be matched to a vertex of the central construction at some (arbitrary) point during the dynamics and has to be left single again. Otherwise the clause gadget cannot be transformed into the state it has in the desired final matching.

We first outline the universal proof approach including only this one particular vertex $x_C$ per clause $C$ (and the central construction). We show that it is \classNP-hard to decide whether there is a sequence of improvement steps such that each of the clause vertices gets matched and dropped at least once. Afterwards, for every setting we will give the exact clause gadget and explain why it is necessary to match $x_C$ to some vertex outside the clause gadget to reach the final state.

Given a \ThreeSAT\ formula with $k$ variables $x_1,\ldots,x_k$ and $l$ clauses $C_1,\ldots,C_l$, where clause $C_j$ contains the literals $l_{1,j}, l_{2,j}$ and $l_{3,j}$, for our central construction we have
\begin{eqnarray*}
    U =&&\{u_{x_i}|i=1\ldots k\} \cup \{u_{\overline{x}_i}|i=1\ldots k\} \cup \{x_{C_j}|j=1\ldots l\},\\
    W =&&\{w_{x_i}|i=1\ldots k\} \cup \{w_{\overline{x}_i}|i=1\ldots k\}.
\end{eqnarray*}
Further $E=E_1\cup E_2\cup E_3$ with
$E_1=\{u_{x_i},w_{x_i}\}, \{u_{\overline{x}_i},w_{\overline{x}_i}\}\mid i=1\ldots k\}$, 
$E_2=\{u_{x_i},w_{\overline{x}_i}\}, \{u_{\overline{x}_i},w_{x_i}\}\mid i=1\ldots k\}$, 
and $E_3=\{\{x_{C_j},w_{l_{i,j}}\}\mid j=1\ldots l, i=1\ldots 3\}$, 
and benefits in Table~\ref{benefits}.

\setlength{\tabcolsep}{5pt}
\begin{center}
\begin{table}
 \caption{Table of edge benefits}
\centering
\begin{tabular}{|l|c|c|l|}\hline
$U$ & $W$ & $b(\{u,w\})$ & ~\\\hline
$x_{C_j}$ & $w_{l_{i,j}}$ & $i\cdot l+j$ & $j=1\ldots l$, $i=1\ldots 3$\\\hline
$u_{x_i}$ & $w_{\overline{x}_i}$ & $4l+i$ & $i=1\ldots k$\\\hline
$u_{\overline{x}_i}$ & $w_{x_i}$ & $4l+k+i$ & $i=1\ldots k$\\\hline
$u_{x_i}$ & $w_{x_i}$ & $4l+2k+i$ & $i=1\ldots k$\\\hline
$u_{\overline{x}_i}$ & $w_{\overline{x}_i}$ & $4l+3k+i$ & $i=1\ldots k$.\\\hline
\end{tabular}
       \label{benefits}
\end{table}
\end{center}
For a small example see Figure~\ref{bild3Sat}. In the case of locally and socially stable matching we will have social links between all vertices of $U$ and $W$ to make sure that all edges of $E$ are available for matching at all times. In the case of friendship matching we set all $\alpha$ to $0$ to ensure that utility is also perceived utility.

We start from $M_0=E_2$ and want to reach $M^*=E_1$ which also is the only stable state of this graph. This transformation is always possible, but we now want to decide whether we can construct a sequence which involves all vertices $x_{C_j}$.

First, let us build an intuition what has to happen to match each $x_{C_j}$. Note that we have to create some edge $\{x_{C_j},w_{l_{i,j}}\}$ of $E_3$ for every clause $C_j$, but in the beginning all those edges are blocked through $E_2$. During the dynamics per variable we can switch one edge of $E_2$ to $E_1$ freeing the other $w$-vertex. Then this vertex can be used to ``visit'' all the adjacent clauses in increasing order before creating the second edge of $E_1$. But the $w$-vertex which switched first remains blocked and thus can be used for none of the clauses. Thus, the choice whether to create $\{u_{x_i},w_{x_i}\}$ or $\{u_{\overline{x}_i},w_{\overline{x}_i}\}$ first can be seen as the choice whether to set $x_i$ true or false (by creating the opposite edge first). All clauses that include the variable in the corresponding assignment then can be matched using $w_{x_i}$ respectively $w_{\overline{x}_i}$. We will now formally prove the correctness of the reduction.

Assume that the \ThreeSAT\ formula is satisfiable. Then we pick a satisfying assignment and for each variable generate the edge of $E_1$ which symbolizes the inverses of the assignment. Now the $w$-vertex in the assigned value of every variable is unmatched and we one by one generate the incident edges leading to the clause variables in increasing order starting from the smallest unblocked edge. As for every clause at least one literal is satisfied and the edges are created in increasing order and thus cannot block each other, by the end of this phase all vertices $x_{C_j}$ were matched at least once. It remains to generate the second edge for every variable, and we have reached $M^*$ with a sequence of the desired form.

Assume that we can reach $M^*$ from $M_0$ with a sequence matching each $x_{C_j}$ at least once. For each clause $C_j$ pick a vertex $w_{l_{i,j}}$ which was matched to $x_{C_j}$. We claim that for no variable $x_{i}$ both vertices $w_{x_i}$ and $w_{\overline{x}_i}$ are picked: In the beginning both vertices are matched through an edge larger than any edge leading to a clause vertex. Thus to match one of these vertices to some $x_{C_j}$ it first has to become single, that is, its matching partner $u_{\overline{x}_i}$ respectively $u_{x_i}$ has to leave for a better partner. But the only better partner for $u_{\overline{x}_i}$ is $w_{\overline{x}_i}$ and the only better partner for $u_{x_i}$ is $w_{x_i}$. Further, both edges then are stable as they are the top choice of both partners. Hence, to make $w_{x_i}$ available we have to block $w_{\overline{x}_i}$ for the rest of the dynamics and to make $w_{\overline{x}_i}$ available we have to block $w_{x_i}$ for the rest of the dynamics. Now as at most one $w$-vertex of each variable is picked, we can assign each variable the value of the picked vertex and further assign a random value to each variable with no $w$-vertex picked. Then for each clause at least one literal is fulfilled, that is, the formula is satisfied.
\newcommand{\scale}{0.7}
\begin{figure}[ht]
\begin{center}
\begin{tikzpicture}[thick,scale=0.68]
\tikzstyle{leer} = [draw=none,fill=none]

\node[scale=\scale] (ux) at (0,7)[draw=black, circle, fill=none]{$u_{x}$};
\node[scale=\scale] (unx) at (3,7)[draw=black, circle, fill=none]{$u_{\overline{x}}$};
\node[scale=\scale] (wx) at (0,4)[draw=black, circle, fill=none]{$w_{x}$};
\node[scale=\scale] (wnx) at (3,4)[draw=black, circle, fill=none]{$w_{\overline{x}}$};

\node[scale=\scale] (uy) at (6,7)[draw=black, circle, fill=none]{$u_{y}$};
\node[scale=\scale] (uny) at (9,7)[draw=black, circle, fill=none]{$u_{\overline{y}}$};
\node[scale=\scale] (wy) at (6,4)[draw=black, circle, fill=none]{$w_{y}$};
\node[scale=\scale] (wny) at (9,4)[draw=black, circle, fill=none]{$w_{\overline{y}}$};

\node[scale=\scale] (uz) at (12,7)[draw=black, circle, fill=none]{$u_{z}$};
\node[scale=\scale] (unz) at (15,7)[draw=black, circle, fill=none]{$u_{\overline{z}}$};
\node[scale=\scale] (wz) at (12,4)[draw=black, circle, fill=none]{$w_{z}$};
\node[scale=\scale] (wnz) at (15,4)[draw=black, circle, fill=none]{$w_{\overline{z}}$};

\node[scale=\scale] (cj) at (7.5,2) [draw=black, circle, fill=none]{$x_{C_j}$};

\path[-,dashed]
(ux) edge node[scale=0.9,left]{\tiny $w_x+2k$} (wx)
(ux) edge node[scale=0.9,above left]{\tiny $w_x~~~$} (wnx)
(unx) edge node[scale=0.9,above right]{\tiny $~~~w_x+k$} (wx)
(unx) edge node[scale=0.9,right]{\tiny $w_x+3k$} (wnx)
(uy) edge node[scale=0.9,left]{\tiny $w_y+2k$} (wy)
(uy) edge node[scale=0.9,above left]{\tiny $w_y~~~$} (wny)
(uny) edge node[scale=0.9,above right]{\tiny $~~~w_y+k$} (wy)
(uny) edge node[scale=0.9,right]{\tiny $w_y+3k$} (wny)
(uz) edge node[scale=0.9,left]{\tiny $w_z+2k$} (wz)
(uz) edge node[scale=0.9,above left]{\tiny $w_z~~~$} (wnz)
(unz) edge node[scale=0.9,above right]{\tiny $~~~w_z+k$} (wz)
(unz) edge node[scale=0.9,right]{\tiny $w_z+3k$} (wnz)
(cj) edge node[scale=0.9,below]{\tiny $l+j$} (wx)
(cj) edge node[scale=0.9,left]{\tiny $2l+j$~~} (wny)
(cj) edge node[scale=0.9,below]{\tiny $~~3l+j$} (wz)
;

\end{tikzpicture}
\end{center}
\caption{Central gadget with variables $x$, $y$, $z$ and clause $C_j=x\vee \overline{y} \vee z$}\label{bild3Sat}
\end{figure}
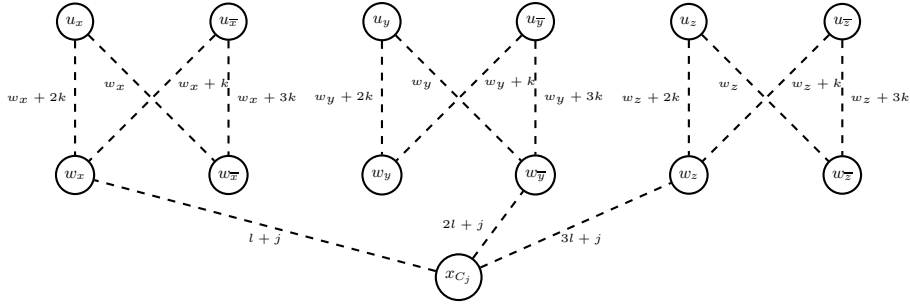
Finally, we design appropriate clause gadgets for each case:
\begin{enumerate}
\item For \emph{socially and locally stable matching} we add a vertex $y_{C_j}$ to $W$ and an edge $\{x_{C_j},y_{C_j}\}$ of benefit $j$ to $E$ for every clause $C_j$. Further we also add all the edges $\{x_{C_j},y_{C_j}\}$ to the starting state $M_0$ but keep $M^*$. Note that we did no add any social links for $y_{C_j}$. Thus $M^*$ is stable and can be reached if and only if we rematch every $y_{C_j}$ at least once (and hence delete $\{x_{C_j},y_{C_j}\}$).

\item For \emph{considerate matching} we add two vertices $y_{C_j}$ and $y'_{C_j}$ to $W$ and edges $\{x_{C_j},y_{C_j}\}$ of benefit $j-\frac{1}{2}$ and $\{x_{C_j},y'_{C_j}\}$ of benefit $j$ to $E$ for every clause $C_j$. Further we also add all the edges $\{x_{C_j},y_{C_j}\}$ to the starting state $M_0$ and all the edges $\{x_{C_j},y'_{C_j}\}$ to $M^*$. Finally we introduce a social link between $y_{C_j}$ and $y'_{C_j}$. Now $x_{C_j}$ cannot switch from $y_{C_j}$ to $y'_{C_j}$ as $y'_{C_j}$ is friends with $y_{C_j}$ and would thus reject $x_{C_j}$. But, if $x_{C_j}$ is single, $y'_{C_j}$ does not reject $x_{C_j}$. Hence again we need to make sure that for every clause $x_{C_j}$ is matched to some vertex outside the clause-gadget and dropped to reach $M^*$.

\item For \emph{friendship matching} we add two vertices $y_{C_j}$ and $y'_{C_j}$ to $W$ and edges $\{x_{C_j},y_{C_j}\}$ of benefit $j-\frac{1}{2}$ and $\{x_{C_j},y'_{C_j}\}$ of benefit $j$ to $E$ for every clause $C_j$. The only friendship value $\neq 0$ is $\alpha_{x_{C_j},y_{C_j}}=\frac{1}{2j-1}$. Again we add all the edges $\{x_{C_j},y_{C_j}\}$ to the starting state $M_0$ and all the edges $\{x_{C_j},y'_{C_j}\}$ to $M^*$. Note that by the choice of $\alpha_{x_{C_j},y_{C_j}}$ the perceived value for $x_{C_j}$ from $\{x_{C_j},y_{C_j}\}$ now is $(1+\alpha_{x_{C_j},y_{C_j}})(j-\frac{1}{2})=j-\frac{1}{2}+(j-\frac{1}{2})\frac{1}{2j-1}=j=b(\{x_{C_j},y'_{C_j}\})$, that is, there is a tie in $x_{C_j}$'s preference list regarding $y_{C_j}$ and $y'_{C_j}$. Hence $M^*$ is stable but $x_{C_j}$ will not switch directly from $y_{C_j}$ to $y'_{C_j}$. But once $x_{C_j}$ is single we can match it with $y'_{C_j}$ as desired.
 
\item For \emph{correlated matching with ties} we add two vertices $y_{C_j}$ and $y'_{C_j}$ to $W$ and edges $\{x_{C_j},y_{C_j}\}$ and $\{x_{C_j},y'_{C_j}\}$, both of benefit $j$, to $E$ for every clause $C_j$. Further we also add all the edges $\{x_{C_j},y_{C_j}\}$ to the starting state $M_0$ and all the edges $\{x_{C_j},y'_{C_j}\}$ to $M^*$. Then $x_{C_j}$ does not switch from $y_{C_j}$ to $y'_{C_j}$ as it yields no improvement. But, if $x_{C_j}$ is single, we can choose to match to $y'_{C_j}$.
 
\item For \emph{matching with strict preferences} we first note that, as all edge values in the central gadget are distinct, we can derive a strict preference order over all possible matching partners for each vertex. Now for each clause $C_j$ we add one vertex $x'_{C_j}$ to $U$ and two vertices $y_{C_j}$ and $y'_{C_j}$ to $W$ and edges $\{x_{C_j},y_{C_j}\}$, $\{x_{C_j},y'_{C_j}\}$, $\{x'_{C_j},y_{C_j}\}$ and $\{x'_{C_j},y'_{C_j}\}$ to $E$. For $x_i$ we add $y_{C_j}>_{x_{C_j}}y'_{C_j}$to the bottom of the preference list, that is, all vertices of the central gadget are preferred. For the other preferences we have $y'_{C_j}>_{x'_{C_j}}y_{C_j}$, $x'_{C_j}>_{y_{C_j}}x_{C_j}$ and $x_{C_j}>_{y'_{C_j}}x'_{C_j}$. To $M_0$ we add $\{x_{C_j},y_{C_j}\}$ and $\{x'_{C_j},y'_{C_j}\}$ and to $M^*$ we add $\{x_{C_j},y'_{C_j}\}$ and $\{x'_{C_j},y_{C_j}\}$. Now the clause gadget has two stable states: $\{\{x_{C_j},y_{C_j}\},\{x'_{C_j},y'_{C_j}\}\}$ and  $\{\{x_{C_j},y'_{C_j}\},\{x'_{C_j},y_{C_j}\}\}$. To switch again we first have to break open the stable starting state by matching $x_{C_j}$ to some vertex of the central gadget and then leave $x_{C_j}$ single. Then $y'_{C_j}$ can switch to its preferred choice $x_{C_j}$ which frees $x'_{C_j}$ for $y_{C_j}$ resulting in the desired final state.
\end{enumerate}
\qed

\subsection{Proof of Theorem~\ref{thm:sociallyOneSided}}
The proof is almost identical to the proof for the general case. The only modification is the limitation to pairs that represent social links for the rematching process.
 
In Phase 1 each matched $w\in W$ increases in terms of utility (or becomes unmatched) and the number of matched $w$ only goes down. Thus, after at most $|U|\cdot |W|$ steps Phase 1 is over.
 
For Phase 2 we maintain the invariant that no matched $w \in W$ is part of a social blocking pair in any step of the phase. Assume conversely that at some point in Phase 2 there is some matching edge $\{u,w\}$ where $w \in W$ is part of a social blocking pair $\{u',w\}$. As Phase 1 ends only when no matched $w$ can improve further, this situation has to occur after some social blocking pair $\{w',u''\}$ has been resolved in Phase 2. But as $w$ is still matched, this matching edge does not influence $w$'s utility. Also, $u''$ did improve and no vertex in $U$ drops in terms of utility as $w'$ was unmatched before and thus did not leave an agent when matching to $u''$. Hence, all vertices in $U$ which did not want to match to $w$ before still do not want to match $w$. Therefore no matched $w$ can be involved in a social blocking pair during Phase 2. As no matched $w\in W$ ever rematches, no $u\in U$ becomes unmatched and decreases in utility during Phase 2. Thus, in Phase 2 there can be at most $|U| \cdot |W|$ steps. The output is a socially stable matching, as there is no social blocking pair for matched (invariant) and unmatched (Phase 2 terminates) $w \in W$. 

\subsection{Proof of Theorem~\ref{thm:considerOneSided}}
Observe that if an edge $\{u,w\} \in L$ forms in $M$, then there are no further considerate blocking pairs for $u$ and $w$ throughout. Hence, if this happens, $\{u,w\}$ remains fixed throughout the run of the algorithm. This does not harm any of the subsequent arguments.

In Phase 1, we again observe that the number of matched $w\in W$ can only decrease. Also, no $w\in W$ ever rematches with some $u\in U$ to which it had been matched before as each matched $w\in W$ only switches partner if it can improve utility by doing so. Once an agent of $W$ loses its partner (due to some other vertex of $W$ matching to it), it will not be considered in Phase 1 anymore. Hence, overall Phase 1 terminates after at most $|U|\cdot |W|$ steps.

For Phase 2 we again maintain the invariant that no matched $w\in W$ is involved in a considerate blocking pair. This claim holds directly after Phase 1 ended. We show that if this holds before some considerate blocking pair $\{u,w\}$ is resolved, then in the resulting matching it holds again. Assume conversely that after $\{u,w\}$ is resolved some matched vertex becomes part of a considerate blocking pair. As $w$ was single, he does not leave any partner in $U$ when matching with $u$. So $w$'s choice was not constrained by the links, and hence $\{u,w\}$ was an ordinary blocking pair. By picking the most preferred one, $w$ is not part of any blocking pair afterwards. Now $u$ matching with $w$ of course opens up the possibility for his former partner $w'$ (if any) to move to $u'$ with $\{u,u'\} \in L$, but this former partner is now unmatched. As there are no links between vertices in $W$, inserting matching edge $\{u,w\}$ alters only the accessible partners for $w$ and $w'$. Since $u$ increases in utility, there are also no additional (considerate) blocking pairs involving $u$. Thus, every considerate blocking pair that evolves must have been present before. This proves that Phase 2 also terminates after at most $|U|\cdot |W|$ steps. The output is a considerate matching, as there is no considerate blocking pair for matched (invariant) and unmatched (Phase 2 terminates) $w \in W$. 
\qed

\subsection{Proof of Theorem~\ref{perceivedOnesided}}
Note that in the case of matching with friendship the term \emph{most preferred blocking pair} refers to a perceived blocking pair whose resolution provides the largest perceived welfare.

In phase 1 the number of matched $w\in W$ can only decrease and no $w\in W$ ever rematches with some $u\in U$ it has been matched to before. Once an agent of $W$ loses its partner (due to some other vertex of $W$ matching to it), it will not be considered in Phase 1 anymore. Also, each matched $w\in W$ only switches partner if it can improve perceived utility by doing so. Due to the structure of $\alpha$, the perceived benefit for $w$ does only result from its direct matching partner. Thus $w$ only switches $u$ to $u'$ if $b_w(u,w)<b_w(u',w)$. Hence, $w$ can only be involved in at most $|U|$ resolutions of perceived blocking pairs. Overall, Phase 1 again terminates after at most $|U|\cdot |W|$ steps.

Phase 2 becomes slightly more complicated to analyze. We maintain the invariant that no matched $w\in W$ is involved in a perceived blocking pair. This claim holds directly after Phase 1 ended. We show that if this holds before some perceived blocking pair $\{u,w\}$ is resolved, then in the resulting matching it holds again. Assume conversely that after $\{u,w\}$ is resolved some matched vertex becomes part of a perceived blocking pair. We know that $w'\neq w$ as $u$ was $w$'s most preferred blocking pair partner. Thus, $w'$ was already matched to some $u'\neq u$ before $\{u,w\}$ was resolved but only becomes involved in a perceived blocking pair $\{u'',w'\}$ now. Let $M$ be the matching before $\{u,w\}$ is resolved and $M'$ the matching resulting resolving $\{u,w\}$. Also, let $M''$ be the matching resulting from $M'$ when resolving $\{w',u''\}$, and $\tilde{M}''$ the matching resulting from $M$ if we add $\{u,w\}$ and delete all adjacent edges (that is, resolve $\{u,w\}$ although it might not be a blocking pair). As for $w'$ we have $B_p(M,w')=B_p(M',w')$ and $B_p(M'',w')=B_p(\tilde{M}'',w')$, $w$ is already willing to switch in $M$. Thus $u''$ must not want to switch in $M$ but in $M'$. First assume that $u''=u$. Then $B_p(M,u)<B_p(M',u)$ and further $M''=\tilde{M}''$. Hence, if $u''$ is willing to switch in $M'$, the same holds for $M$. Now assume that $u''\neq u$. Then $u''$ might receive perceived benefit from $u$ which changes from $M$ to $M'$. Note that, as $w$ was unmatched before (that is, $w$ did not leave some benefit-providing $\tilde{u}$ for $u$) and $u''$ does not receive perceived benefit from any vertex in $W$, this is the only perceived benefit that changes for $u''$ between $M$ and $M'$. But then $B_p(M')-B_p(M'')=b_{u''}(\{u'',w'\})-\alpha_{u'',u'}b_{u'}(\{u',w'\})=B_p(M)-B_p(\tilde{M}'')$. Thus, again $u'$ has the same incentive to switch in $M$ as in $M'$. Next, we realize that if some $u\in U$ is matched to some $w$, it is only willing to switch to some unmatched $w'$ if $b_{u}(\{u,w'\})>b_{u}(\{u,w\})$. Thus, once a vertex $u\in U$ is matched in Phase 2, it never becomes unmatched again (as no matched $w\in W$ wants to switch). In every rematching step $u$ increases its direct benefit, so $u$ can only be involved in at most $|W|$ resolutions of perceived blocking pairs. This proves that Phase 2 also terminates after at most $|U|\cdot |W|$ steps. The output is a friendship matching, as there is no perceived blocking pair for matched (invariant) and unmatched (Phase 2 terminates) $w \in W$. 
\qed

\end{document}